\newtheorem{thm}{Theorem}
\newtheorem{lmm}[thm]{Lemma}
\begin{document}

\newcommand{\mcs}{\mathcal{S}}
\newcommand{\mcp}{\mathcal{P}}
\newcommand{\mcd}{\mathcal{D}}
\newcommand{\mcc}{\mathcal{C}}
\newcommand{\mct}{\mathcal{T}}

\def\be{\begin{equation}}
\def\ee{\end{equation}}
\def\bi{\begin{itemize}}
\def\ei{\end{itemize}}
\newcommand{\la}{\langle}
\newcommand{\ra}{\rangle}
\newcommand{\ta}{\theta}


\newcommand{\R}{\mathbb{R}}
\newcommand{\C}{\mathbb{C}}
\newcommand{\pab}{p(a,b)}
\newcommand{\calo}{\mathcal{O}}
\newcommand{\co}{\mathrm{Cor}}
\newcommand{\beh}{\mathrm{Beh}}
\newcommand{\cut}{\mathrm{Cut}}
\newcommand{\met}{\mathrm{Met}}

\newcommand{\calE}{\mathcal{E}}
\newcommand{\calH}{\mathcal{H}}
\newcommand{\calK}{\mathcal{K}}
\newcommand{\calA}{\mathcal{A}}
\newcommand{\calS}{\mathcal{S}}
\newcommand{\calQ}{\mathcal{Q}}

\newcommand{\states}[1]{\mathscr{S}(#1)}

\def\bi{\begin{itemize}}
\def\ei{\end{itemize}}
\newcommand{\rank}{\textrm{rank}}
\newcommand{\cornm}{\pi(\calE_{n+m})}

\newcommand{\bramatket}[3]{\langle #1 \hspace{1pt} | #2 | \hspace{1pt} #3 \rangle}
\newcommand{\nbox}[2]{\hspace{#2pt} \mbox{#1} \hspace{#2pt}}
\newcommand{\ket}[1]{| \hspace{1pt} #1 \rangle}
\newcommand{\ketL}[1]{| \hspace{1pt} #1 \rangle\rangle}

\newcommand{\ketbrad}[2]{| \hspace{1pt} #1 \rangle \langle #2 \hspace{1pt} |}
\newcommand{\ketbra}[1]{\ketbrad{#1}{#1}}
\newcommand{\braket}[2]{\langle #1 \hspace{1pt} | \hspace{1pt} #2 \rangle}
\newcommand{\bra}[1]{\langle #1 \hspace{1pt} |}
\newcommand{\braL}[1]{\langle\langle #1 \hspace{1pt} |}
\newcommand{\abs}[1]{| #1 |}
\newcommand{\avg}[1]{\left\langle #1 \right\rangle}
\newcommand{\tr}{\mathrm{tr}}
\newcommand{\Tr}{\mathrm{Tr}}
\newcommand{\gell}{\Gamma_\ell}
\newcommand{\ginf}{\Gamma_\infty}
\newcommand{\ext}{\mathrm{ext}}
\newcommand{\hc}{\hat{C}}
\newcommand{\gram}{\mathrm{Gram}}
\newcommand{\Span}{\mathrm{span}}
\newcommand{\Spec}{\mathrm{spec}}

\newcommand{\diff}{\mathrm{d}}

\newcommand{\norm}[1]{\left\lVert#1\right\rVert}

\newcommand{\ip}[2]{\left\langle #1 , #2\right\rangle} 

\newcommand{\id}{\mathds{1}}

\newcommand\inner[2]{\langle #1, #2 \rangle}
\newcommand\innerHS[2]{\langle #1, #2 \rangle_{\mathrm{HS}}}

\newcommand{\otimesmin}{\otimes_{\min}}
\newcommand{\otimesmax}{\otimes_{\max}}

\newcommand{\qset}[1]{\mathcal{Q}(#1)}


\title{Structure  of the set of quantum correlators via semidefinite programming }


\author{Le Phuc Thinh}
\email{cqtlept@nus.edu.sg}
\affiliation{Centre for Quantum Technologies, National University of Singapore}
%
\author{Antonios Varvitsiotis}
\email{avarvits@gmail.com}
\affiliation{Department  of Electrical \& Computer Engineering, National University of Singapore}
\author{Yu Cai}
\email{caiyu01@gmail.com}
\affiliation{Centre for Quantum Technologies, National University of Singapore}


\begin{abstract}
Quantum information leverages 
 properties of quantum behaviors in order to perform useful tasks such as secure communication and randomness certification. Nevertheless, not much is known about the intricate geometric  features of the   set quantum behaviors.
In this paper we study the structure of the set of quantum correlators 
 using  semidefinite programming.
  Our  main results  are $(i)$ a generalization of the analytic description by Tsirelson-Landau-Masanes,     $(ii)$ necessary and sufficient  conditions for    extremality  and exposedness, and  $(iii)$  an operational interpretation of extremality in the case of two dichotomic measurements, in terms of self-testing. 
We illustrate the usefulness of  our theoretical findings  with many  examples and extensive computational work.

\end{abstract}

\maketitle

\section{\label{sec:intro}Introduction}
Quantum theory distinguishes itself from classically-conceived theories of physics in several ways, most notably in terms of the observable correlations  between space-like separated parties. In the seminal work \cite{bell}, John Bell discovered  that quantum theory allows correlations  between space-like separated observers beyond those allowed by local realistic classical theories, a property known as ``nonlocality''. The existence of nonlocal quantum correlations  predicted theoretically by Bell has  been  confirmed by numerous experiments~\cite{hensen, giustina, shalm}, which have influenced  deeply our understanding of the physical world, and have led  to  real-life applications in cryptography~\cite{ekert91} and randomness certification~\cite{colbeck} among~others.

Despite the usefulness of the nonlocal distributions  that  can be generated using quantum resources, the structure of the  set of quantum distributions is not well understood.  Most notably, there are many  semidefinite  programming hierarchies     that approximate the set of quantum correlations from the exterior, but these hierarchies only converge in the limit, e.g., see \cite{npa,wehner}. In fact, it was  shown only recently in \cite{slofstra} that the set of (tensor product)  quantum correlations is not closed. 

Furthermore, it is known that quantum behaviors are not maximally non-local, and from a fundational perspective, there has been an ongoing research program to identify the physical principles behind quantum theory. Several principles have been proposed, such as information causality~\cite{infocausal} and macroscopic locality~\cite{macrolocal}, but none succeeded to single out the  set of quantum distributions. 

From a practical perspective, to manipulate quantum information, one needs a robust method for  identifying quantum systems. This is a challenging task  due to the intrinsic probabilistic nature of quantum theory. Motivated by this, there has been extensive work studying  the relationship between objects in the theory, namely quantum states and measurements, and the theory's predictions, namely probabilities of experiments. This direction 
line of research is known in the literature as quantum tomography whose recent reincarnation is known as self-testing and gate set tomography. 


In this work we study the  set  of quantum correlations by taking  a   geometric viewpoint. Specifically, as the quantum set  is convex,  its properties can be understood via various   features of its convex geometry, such as its facial structure and its  extreme  points \cite{webster}.

 The main technical tool used in this work  is semidefinite programming, whose relevance is already implicit in the work of Tsirelson~\cite{TS87}, a connection that  was pursued further in~\cite{npa,wehner}. Specifically, it essentially follows from  \cite{TS87} that the geometry of the set of quantum correlators is the projection of the geometry of the elliptope, a convex set of central interest in combinatorial optimization~\cite{DL}.

 The link to semidefinite programming   leads  to several interesting results. First, we generalize  the Tsirelson-Landau-Masanes analytic description for the set of quantum correlators, in the case where at least one party performs at most two dichotomic measurements   (Theorem~\ref{thm:main}). Second, we derive     necessary and sufficient conditions for extremality (Theorem~\ref{extpointschar}),  and a sufficient condition for exposedness (Theorem~\ref{thm:exposed}). Third,   we give  an operational interpretation of extremality in the case of two dichotomic measurements (Theorem~\ref{thm:extreme-selftest}). 
 
 Our last result is particularly striking as it highlights that natural geometric features of the set of quantum correlators, such as extremality, correspond to important physical and operational properties.

%
 
 Lastly, we note that the approach of studying the quantum set within the framework  of convex analysis  was  also employed  in \cite{geometry}. Our approach differs in that 
we  use as our main tool  properties of semidefinite programming.

We start  with a brief review of Bell nonlocality (Section~\ref{sec:nonlocality}), followed 
followed by the precise definition of the set of quantum correlators---our main object of study---and a characterization via the positive semidefinite matrix completion problem (Section \ref{sec:linktosdp}), 
 which leads to an analytic description (Section~\ref{sec:analytric}). Further expoiting this connection, 
we study the convex geometry of quantum correlators, its extreme points (Section~\ref{sec:extreme}) and exposed points (Section~\ref{sec:exposed}), and present a connection between the geometric concept of extremality with the operational task of self-testing (Section~\ref{sec:extreme-selftest}). Algorithmic results and examples are interspersed in these sections whenever convenient. We conclude the paper with a high level overview of our results and pointers towards future~research.

\section{\label{sec:nonlocality}Bell nonlocality}
\noindent {\em Bell experiment. } 
The experimental and mathematical framework for studying behaviors between two space-like separated parties  is known as a bipartite Bell experiment. In this setting two parties, called Alice and Bob, perform independently and simultaneously local measurements on their  corresponding subsystems and  record the resulting  outcomes. In  this work, we restrict to Bell experiments where the   parties   can only  perform {\em dichotomic} (i.e.  two-outcome) measurements. 


%

The event  that the first party performed measurement $x$ and got outcome $a$, whereas the second party performed measurement   $y$ and got  outcome $b$  is denoted by $(a,b|x,y)$. The next step is to consider  physical theories,  that assign probabilities $p(a,b|x,y)$ to all possible  events of a Bell experiment.  The collection of all joint conditional probabilities $p(a,b|x,y)$ is called a  \emph{full behavior}. 

Clearly, for any physical theory,    behaviors satisfy nonnegativity and normalization conditions,~i.e., 
\begin{align*}
 p(a,b|x,y) &\geq 0,\  \forall a,b,x,y, \\
\sum_{a,b} p(a,b|x,y) &=1,\  \forall x,y.
\end{align*}
Furthermore,  for reasonable physical theory,   the  behaviors that can be generated between space-like separated parties have the property  that each parties local  marginal distibution does not depend on the other parties choice of measurement, i.e., 
\be\label{eq:ns}
\begin{aligned}
 \sum_b p(a,b|x,y) &= \sum_b p(a,b|x,y'),\ \forall a,x, y\ne y' \text{ and }\\
\sum_a p(a,b|x,y) &= \sum_a p(a,b|x',y),\ \forall b,y, x\ne x'.
\end{aligned}
\ee
A full behavior that satisfies all the linear constraints given in \eqref{eq:ns} is called {\em no-signaling}. Given a no-signaling behavior, we denote by $p_A(a|x)$ and $p_B(b|y)$ the local marginal distributions of Alice and Bob, respectively.

As we only consider dichotomic measurements,  
we can use an equivalent parametrization of  a no-signaling behavior  $p(a,b|x,y)$ in terms of average values. Explicitly, for  outcome labels  $a,b \in\{\pm 1\}$ we use  the affine bijection 
\be\label{covmap}
p(a,b|x,y) \mapsto (c_x, c_y, c_{xy})
\ee
where
$$c_x = \sum_{a\in \{\pm 1\}}ap_A(a|x) \quad 
c_y = \sum_{b\in \{\pm 1\}} bp_B(b|y),$$ 
$$c_{xy} = \sum_{a,b\in \{\pm 1\}}abp(a,b|x,y).$$
The image of the set  of full behaviors  under the map \eqref{covmap}  is called the set of  {\em full correlators}, and its  elements are denoted by  
$(c_x, c_y, c_{xy})$. Lastly, the  coordinate projection of the set of full correlators on the coordinates $c_{xy}$ is  the {\em correlator space} and its element  are called \emph{correlators}.\\

\noindent {\em Local behaviors and correlators. } 	A behavior $p(a,b|x,y)$ is called local deterministic if $p(a,b|x,y) = \delta_{a,f(x)} \delta_{b,g(y)}$, where $\delta$ is the Kronecker delta function and $f,g$ are functions from the input set to the output set.  Furthermore, a behaviour $p(a,b|x,y)$ is called {\em local}, if it can be written as a convex combination of local deterministic behaviours, i.e.,  $		p(a,b|x,y) = \sum_i \mu_i p_i,$ where $\mu\geq 0$, $\sum_i \mu_i = 1$ and each $p_i$ is local deterministic. 

Fixing  outcome labels  $a,b \in\{\pm 1\}$, the corresponding set of correlators is  the convex hull of all $n\times m$ matrices  $xy^\top$, where  $x\in \{\pm 1\}^{n\times 1} $ and $ y\in \{\pm1 \}^{m\times 1}$.  This  is known as the \emph{cut polytope} of the complete bipartite graph $K_{n,m}$ (in $\pm 1$ variables) and is of central importance in the field of combinatorial optimization~\cite{DL}.


\medskip 
\noindent {\em Quantum behaviors and correlators.}
According to the (Hilbert space) axioms of quantum mechanics, a full behavior $p(a,b|x,y)$ is quantum, if there exist $\rho $, a  quantum state  acting  on the  Hilbert space $\mathcal{H}_A\otimes \mathcal{H}_B$, and  $\{M^x_a\}$ and $\{M^y_b\}$, local measurements acting on $\mathcal{H}_A$ and $\mathcal{H}_B$,~i.e., 
\begin{align*}
\rho \succeq 0  &\,\text{ and }\, \tr(\rho)=1;\\
M^x_a \succeq 0 &\,\text{ and }\, \sum_a M^x_a = 1;\\
M^y_b \succeq 0 &\,\text{ and }\, \sum_b M^y_b = 1\,.
\end{align*}
such that  $p(a,b|x,y)=\tr(M^x_a\otimes M^y_b\rho)$.
Equivalently, a full correlator  $(c_x,c_y, c_{xy})$ is quantum, if there exist  a  quantum state $\rho$   acting  on the  Hilbert space $\mathcal{H}_A\otimes \mathcal{H}_B$, and $\pm 1$ observables $A_1,...,A_n$, $B_1,..,B_m$ acting on  $\mathcal{H}_A$ and $\mathcal{H}_B$ respectively, such that
\be\label{qcorrelation}
\begin{aligned}
c_x&=\tr((A_x\otimes I)\rho), \ x\in [n];\\
c_y & =\tr((I\otimes B_y)\rho), \ y\in [m];\\
 c_{xy} & = \tr(\rho A_x\otimes B_y),\  x\in [n], y\in [m]\,.
\end{aligned}
\ee

The set of quantum correlators, denoted by $\co(n,m)$, is the set of all vectors $c_{xy}$ where  $c_{xy} = \tr(\rho A_x\otimes B_y)$ for a quantum state $\rho$ and $\pm 1 $ observables $A_1,...,A_n$, $B_1,..,B_m$. It is evident that the difference with full correlators lies in the lack of local marginals $a_x,b_y$. Note that the set of quantum correlators is a compact and convex subset of the cube $[-1,1]^{n m}$.
Throughout this work,  we arrange the entries of a quantum correlator $c_{xy}\in\co(n,m)$ as an $n\times m$ matrix  $C$, which we call a quantum correlation matrix. We use the vector and matrix representations  interchangeably.

\section{\label{sec:cor-analytic}Link to  semidefinite programming}\label{sec:linktosdp}


A   semidefinite program (SDP) is a mathematical  optimization problem, where the objective is to  optimize a linear function over an affine slice of the cone of positive semidefinite  matrices.  A SDP in primal canonical form is given by 
\begin{equation}\tag{P}\label{eq:primalsdp}
p^*=\underset{X}{\text{sup}} \left\{ \la C,X\ra  :    X \succeq  0,\   \la A_i,X \ra=b_i \ (i\in [\ell]) \right\},
 \end{equation}
where $\la \cdot, \cdot \ra$  denotes  the trace inner product of matrices,  and  the  generalized inequality 
 $X\succeq0$ means that the matrix  $X$ is positive semidefinite, i.e., it has nonnegative eigenvalues.
SDPs are  an important   generalization of linear programming, which is obtained as a special case when all involved matrices are diagonal. SDPs  have significant modeling power, powerful duality theory,  and efficient algorithms for solving them.

The link between quantum correlators and SDPs originates in  the seminal work of Tsirelson  \cite[Theorem~2.1]{TS87}. Specifically, Tsirelson showed  that 
   a    matrix  $C=(c_{xy})\in  [-1,1]^{n\times m}$ is a quantum correlation matrix if and only if there exists a collection of real unit vectors $u_1,\ldots,u_n,v_1,\ldots,v_m$ such that 
 \begin{align}\label{veveve}
 c_{xy}=\la u_x,v_y\ra, \ \text{ for all }  x\in [n], y\in [m]. 
\end{align}
We note that $\la \cdot, \cdot \ra$ denotes the canonical  inner product of vectors. As a consequence of  Tsirelson's theorem 
it follows   that $C=(c_{xy})\in  [-1,1]^{n\times m}$ is  a quantum correlation matrix if and only if  the following SDP is feasible:
\be\label{primal:completion}
\begin{aligned}
\underset{X}{\max} & \quad 0 \\
 \text{s.t.} & \quad X_{xy}=c_{xy}, \ x\in [n], y\in [m],  \\
& \quad X_{ii}=1, \ i\in [n+m],\\
 & \quad X\in \mathcal{S}^{n+m}_+ .
\end{aligned}
\ee

To establish  this  equivalence note that  if $c_{xy}=\la u_x,v_y\ra, \forall x\in [n], y\in [m],$ where   $\|u_x\|=\|v_y\|=~1$, the Gram matrix $\gram(u_1,\ldots,u_n,v_1,\ldots,v_m)$ is feasible  for~\eqref{primal:completion}. Conversely, if $X\in \mathcal{S}^{n+m}_+$ is feasible for~\eqref{primal:completion},  by the spectral theorem for real symmetric matrices, there exist  unit vectors $u_1,\ldots,u_n,v_1,\ldots,v_m$ such that $X=\gram(u_1,\ldots,u_n,v_1,\ldots,v_m)$. Clearly, this implies that $c_{xy}=\la u_x,v_y\ra,\ \forall x\in [n], y\in [m]$, and thus $C=(c_{xy})$ is a quantum correlation matrix.



 Furthermore, a   geometric interpretation of Tsirelson's theorem 
 is that  the set of $n\times m$ quantum correlation matrices $\co(n,m)$  is the image of  the set of $(n+m)\times (n+ m)$ positive semidefinite   matrices with  diagonal entries equal to one, denoted by $\calE_{n+m}$,  under the~projection
 \be\label{eq:projection}
 \Pi: \mathcal{S}^{n+m} \rightarrow  \R^{n\times m},  \quad  \begin{pmatrix}A& C\\C^\top& B\end{pmatrix} \mapsto C,
\ee
  i.e., we have that  $\co(n,m)=\Pi(\calE_{n+m})$. 
  
  Any  matrix  in  $\Pi^{-1}(C)\cap \calE_{n+m}$  is called  a {\em PSD completion}  of $C$. 
  Thus, checking whether a   matrix  $C=(c_{xy})\in  [-1,1]^{n\times m}$ is a quantum correlation matrix reduces to checking that the partially specified  matrix 
   $ \left(\begin{smallmatrix}?& C\\C^\top& ?\end{smallmatrix} \right)$
 can be completed to a full  PSD matrix with diagonal entries equal to one. For 
 example, the CHSH correlator
$
C={1\over \sqrt{2}}\left(\begin{smallmatrix}
1 & 1\\1 & -1
 \end{smallmatrix}\right),
 $
 is quantum, as the  corresponding partial~matrix 
 \be\label{cervrgrt}
 \left( \begin{array}{cc|cc}
 1 & a & {1/ \sqrt{2}}& {1/ \sqrt{2}}\\
 a & 1 & {1/ \sqrt{2}} & -{1/ \sqrt{2}}\\
 \hline
 {1/ \sqrt{2}} & {1/ \sqrt{2}}& 1 & b\\
 {1/ \sqrt{2}}& -{1/ \sqrt{2}}& b & 1
 \end{array}\right),
 \ee 
 admits a PSD completion, obtained for  $a=b=0$. 

The problem of completing a partially specified matrix into a full PSD matrix is an important instance of semidefinite programming, referred    to as the {\em PSD matrix completion problem}, e.g. see  \cite{monique} and references~therein. 

One of the most fruitful  approaches for studying the PSD matrix completion problem has been the  use of graph theory. Specifically, let  $G=([n], E)$ be a simple  undirected graph, whose edges encode the positions of the known entries of the matrix.  
The {\em elliptope or coordinate shadow of a graph} $G$, denoted by $\calE(G)$,  is defined as the image of $\calE_{n}$ under the coordinate~projection 
 \be
 \Pi_G: \mathcal{S}^{n} \rightarrow  \R^{E},  \quad  A\mapsto (A_{ij})_{ij\in E}.
\ee
In other words,  any  vector  $a\in \calE(G)\subseteq \R^E$ corresponds to a $G$-partial matrix, that admits a completion to a full PSD matrix with diagonal entries equal to one.  

The   properties and structure of the elliptope of  a graph have been studied extensively, e.g.~see~\cite{dima,DL,wolk,shin}. 
By the definition of the elliptope if a graph, it is clear that $$\co(n,m)=\calE(K_{n,m}),$$ 
where  $K_{n,m}$ denotes the complete bipartite graph, where the bipartitions have  $n$ and $m$ vertices respectively. This link allows us to utilize   properties concerning  the structure of the elliptope of a graph $\calE(G)$, in our study of the structure  of the set of quantum correlators. 

\section{Analytic description of $\co(n,m)$}\label{sec:analytric}
A first property of $\calE(G)$ of  relevance  to this work is that  the elliptope of a graph  $G=([n],E)$ is subset of a nonlinear transform of the  {\em metric polytope}, denoted by  $\met(G)$,  which consists    of all vectors $x=(x_e)\in\R^E$ satisfying:
\begin{align} 
& 0\le x_e\le 1, \text{ for all } e\in E;\label{box}\\
&  \sum_{e\in F} x_e-\sum_{e\in C\setminus F}x_e \le |F|-1\label{cycle},
\end{align}
where $C$ is any cycle in $G$,  and  $F$  is any odd cardinality subset of $C$. Recall that a cycle in a graph is a sequence of vertices starting and ending at the same vertex, where each two consecutive vertices in the sequence are adjacent to each other. We refer to the inequalities of type \eqref{box} and \eqref{cycle}   as  {\em box inequalities} and    {\em cycle inequalities} respectively. 

The relation between $\calE(G)$ and $\met(G)$
is completely understood, e.g. see~\cite[Theorem 4.7]{L97}. Specifically, its is known that
for any graph $G$, we have the inclusion  $\calE(G)\subseteq \cos(\pi(\met(G)))$. Furthermore, equality holds if and only if the graph $G$ does not have the complete graph on four vertices, denoted by $K_4$,  as a minor. 

Two observations concerning this result 
are in order.
 First, note that  the cosine function   is applied componentwise, i.e., for a vector $x=(x_e)\in\R^E$ we define  $\cos(x)\in \R^E,$ where $\cos(x)_e=\cos x_e.$   Second,  a  graph $H$ is called a  minor  of a graph $G$,  if $H$ can be obtained from $G$  through a series of edge deletions, edge contractions and isolated node deletions.

Based on this 
and the fact that $\co(n,m)=\calE(K_{n,m}),$ we now derive    an analytic description
 for $\co(n,m)$, whenever $\min\{n,m\}\le 2$.   Indeed,  it is easy to check  that   the complete bipartite graph $K_{n,m}$ has the  complete graph on four vertices   $K_4$ as  minor if and only if $\min\{n,m\}> 2$. Thus, it follows that 
  $\co(n,m)=\cos\pi(\met(K_{n,m})),$ whenever    $\min\{n,m\}\le~2$. This gives:


\begin{thm}\label{thm:main}For $\min\{n,m\}\le~2$, 
we have that  $C=(c_{xy})\in \co(n,m)$ if and only if the following  linear  system is feasible
\be
\begin{aligned}
&0\le \theta_{xy}\le \pi, \ \forall x,y,\\
& 0\le \ta_{1j}+\ta_{2i}+\ta_{2j} -\ta_{1i}\le 2\pi ,\\
& 0\le \ta_{1i}+\ta_{2i}+\ta_{2j} -\ta_{1j}\le 2\pi, \\
& 0\le \ta_{1i}+\ta_{1j}+\ta_{2j} -\ta_{2i}\le 2\pi, \\
& 0\le \ta_{1i}+\ta_{1j}+\ta_{2i} -\ta_{2j}\le 2\pi,
\end{aligned} 
\ee
where   $3\le i<j\le n+2$ and  $\theta_{xy}=\arccos(c_{xy})$.

\end{thm}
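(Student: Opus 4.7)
The plan is to invoke directly the theorem of Section~\ref{sec:analytric} stating that $\calE(G)=\cos(\pi(\met(G)))$ precisely when $G$ has no $K_4$ minor, combined with the identification $\co(n,m)=\calE(K_{n,m})$ from Section~\ref{sec:linktosdp}. The proof then reduces to two concrete verifications: that $K_{n,m}$ admits no $K_4$ minor when $\min\{n,m\}\le 2$, and that the inequalities defining $\met(K_{2,m})$ unfold into the explicit form stated.

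First I would confirm that $K_{n,m}$ is $K_4$-minor-free whenever $\min\{n,m\}\le 2$. The case $n=1$ is immediate since $K_{1,m}$ is a tree. For $n=2$, one may observe that $K_{2,m}$ is series-parallel---being the parallel composition of the $m$ length-two paths $a_1 b_i a_2$---and series-parallel graphs are exactly those with no $K_4$ minor. Alternatively, deleting $\{a_1,a_2\}$ leaves an independent set, so $K_{2,m}$ has treewidth at most two, while $K_4$ has treewidth three. Either way, the elliptope--metric equivalence gives $\calE(K_{n,m})=\cos(\pi\met(K_{n,m}))$, so $C\in\co(n,m)$ if and only if the vector $\theta/\pi$ with $\theta_{xy}=\arccos(c_{xy})$ lies in $\met(K_{n,m})$.

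Next I would make the polytope description explicit. Since $K_{2,m}$ is bipartite with only two vertices on one side, every simple cycle must visit both of these vertices exactly once, so all simple cycles have length four and are indexed by pairs $\{i,j\}$ of vertices on the $m$-side. Thus only $4$-cycles contribute cycle inequalities. For such a cycle with edge set $C=\{1i,1j,2i,2j\}$, the substitution $x_e=\theta_e/\pi$ turns the $|F|=1$ inequality with $F=\{e\}$ into $\theta_e \le \sum_{e'\in C\setminus\{e\}}\theta_{e'}$, and the $|F|=3$ inequality with $F=C\setminus\{e\}$ into $\sum_{e'\in C\setminus\{e\}}\theta_{e'} - \theta_e \le 2\pi$. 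Combining these two inequalities for each of the four choices of distinguished edge $e$ yields precisely the four two-sided inequalities displayed in the theorem, while the box conditions $0\le x_e\le 1$ become $0\le\theta_{xy}\le\pi$.

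The main obstacle is essentially notational bookkeeping: one must match the abstract edge labeling to the bipartite convention with vertices $1,2$ on one side and $3,\dots,n+2$ on the other, and verify that pairing the $|F|=1$ and $|F|=3$ inequalities reproduces the listed two-sided inequalities with the correct signs. The nontrivial content is wholly inherited from the elliptope--metric equivalence; without the hypothesis $\min\{n,m\}\le 2$ the identity fails, since $K_{3,3}\subseteq K_{n,m}$ for $\min\{n,m\}\ge 3$ and $K_{3,3}$ in turn has $K_4$ as a minor via the branch sets $\{a_1\},\{b_1\},\{a_2,b_2\},\{a_3,b_3\}$.
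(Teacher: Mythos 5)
Your proof is correct and follows essentially the same route as the paper's: invoking the elliptope--metric polytope equivalence for $K_4$-minor-free graphs, observing that $K_{2,m}$ qualifies, and unfolding the box and 4-cycle inequalities of $\met(K_{2,m})$. You supply a bit more detail than the paper (two proofs of $K_4$-minor-freeness, the explicit argument that all simple cycles of $K_{2,m}$ have length four, and the converse observation that $K_{3,3}$ contains a $K_4$ minor), but these are elaborations of steps the paper asserts rather than a genuinely different argument.
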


As an example, in  the   case of $\co(2,2)$, the resulting  characterization  
is known as the Tsirelson-Landau-Masanes criterion, which has been rediscovered  numerous times, e.g.  see ~\cite{TS87},~\cite{Landau} and \cite{masanes}.  
To the best of our knowledge, the  description for   $\co(2,n)$ when  $n\ge 3$ is new. The proof of Theorem \ref{thm:main} is given in the~Appendix. 

\medskip 

%
%
%
%
%

\section{Extremal and exposed~correlators}
In this section we use   that $\co(n,m)$ is a projection of the elliptope $\calE_{n+m}$, to  study the convex geometry of $\co(n,m)$. We begin with its extreme points 
(or zero-dimensional faces) 
and continue with its  exposed points.

\subsection{\label{sec:extreme}Extremal correlators}

A matrix $C\in\co(n,m)$  is an extreme point of $\co (n,m)$  if the equality $C=\lambda C_1+(1-\lambda)C_2$, where $\lambda\in (0,1)$ and $C_1,C_2\in \co(n,m)$, implies~$C=C_1=C_2$.  

The set of  extreme points of $\co(n,m)$, denoted by $\ext (\co(n,m))$,  is  important for the following  reasons.   Firstly,  since the set of quantum correlators is compact (i.e. closed \& bounded) and  convex, by the Krein-Milman Theorem (e.g.~see \cite[Theorem 3.3]{bar}), $\co(n,m)$  is equal to the convex hull of its extreme points. 

Secondly,  Tsirelson  showed that any quantum realization of an extremal correlator  in  $\co(n,m)$ corresponds to a complex representation of an appropriate Clifford algebra \cite[Theorem 3.1]{TS87}. As a  consequence, depending on the parity of the rank of an extremal correlator $C$, it either has one or two ``non-equivalent'' (up to arbitrary unitaries) quantum representations. In modern language, Tsirelson's work specialized to  the case of even-rank extremal correlators is a  self-testing statement, a  connection we pursue further  Section~\ref{sec:extreme-selftest}.


In this  section we derive  an exact characterization for  extremality in $\co(n,m)$,   in terms of the PSD matrix  completion problem. This fact essentially follows from the work of Tsirelson \cite{TS87, TS93}, and was also  recently  noted in~\cite[Theorem 3.3]{cpsd}. 
 The main tool  in the proof  is a set of  necessary conditions for extremality  derived by Tsirelson, which we have collected in  Theorem \ref{thm:extconditions} in the Apendix.

\begin{thm}\label{extpointschar}
A correlator $C\in \co(n,m)$ is extremal  if and only if $C$ has a unique PSD completion $\hc\in\calE_{n+m}$, and  furthermore, 
\be\label{rankcondition}
{\rm rank}(\hc\circ \hc)=\binom{{\rm rank}(\hc)+1}{2},
\ee
where $\circ $  denotes the componentwise product of matrices.  
\end{thm}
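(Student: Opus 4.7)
My plan is to exploit the identification $\co(n,m)=\Pi(\calE_{n+m})$ from Section~\ref{sec:linktosdp} and reduce the characterization of extremal correlators to a description of the extreme points of the elliptope. The proof then splits into two strands that I would handle in parallel: (a) a self-contained characterization of the extreme points of $\calE_{n+m}$ in terms of the rank condition on the Hadamard product, and (b) a projection argument showing that extremality of $C$ in $\Pi(\calE_{n+m})$ amounts to having a unique preimage in $\calE_{n+m}$ that is itself extremal.

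\textbf{Extreme points of the elliptope.} Given $X\in\calE_{n+m}$ of rank $r$, I would factor $X=UU^\top$ with $U\in\R^{(n+m)\times r}$, and write $u_i^\top$ for the $i$-th row of $U$. The smallest face of $\mathcal{S}^{n+m}_+$ containing $X$ equals $\{UNU^\top:N\in\mathcal{S}^r_+\}$, so the smallest face of $\calE_{n+m}$ containing $X$ is $\{UNU^\top:N\in\mathcal{S}^r_+,\ u_i^\top N u_i=1\ \forall i\}$, in which $X$ sits at $N=I$ inside the interior of $\mathcal{S}^r_+$. Consequently $X$ is extremal in $\calE_{n+m}$ iff the only $M\in\mathcal{S}^r$ with $u_i^\top M u_i=0$ for all $i$ is $M=0$, equivalently, iff the rank-one matrices $\{u_iu_i^\top\}_{i=1}^{n+m}$ span $\mathcal{S}^r$ with respect to the Frobenius inner product. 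Their Gram matrix has $(i,j)$-entry $(u_i^\top u_j)^2=X_{ij}^2$ and hence coincides with $X\circ X$, so the dimension of the span equals $\rank(X\circ X)$, yielding the extremality criterion $\rank(X\circ X)=\binom{r+1}{2}$.

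\textbf{Sufficiency.} Suppose $C$ admits a unique PSD completion $\hc\in\calE_{n+m}$ satisfying \eqref{rankcondition}. By the previous step, $\hc$ is extremal in $\calE_{n+m}$. Given any decomposition $C=\lambda C_1+(1-\lambda)C_2$ with $C_1,C_2\in\co(n,m)$ and $\lambda\in(0,1)$, pick PSD completions $\hc_1,\hc_2\in\calE_{n+m}$ of $C_1,C_2$; then $\lambda\hc_1+(1-\lambda)\hc_2\in\calE_{n+m}$ and projects onto $C$, so by uniqueness it must equal $\hc$. Extremality of $\hc$ in $\calE_{n+m}$ then forces $\hc_1=\hc_2=\hc$, and therefore $C_1=C_2=C$.

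\textbf{Necessity and main obstacle.} The harder direction is to show that extremality of $C$ in $\co(n,m)$ forces the PSD completion to be \emph{unique}; once this is in hand, the rank condition follows by a bootstrapping argument, since a nontrivial decomposition of $\hc$ inside $\calE_{n+m}$ would project to one of $C$ and, combined with uniqueness, contradict extremality. Standard convex-projection arguments only yield that $\Pi^{-1}(C)\cap\calE_{n+m}$ is a \emph{face} of $\calE_{n+m}$, which can a priori be higher-dimensional, so uniqueness requires an additional rigidity input. This is precisely where Tsirelson's necessary conditions for extremality, collected as Theorem~\ref{thm:extconditions} in the appendix, enter: they show that the unit vectors $u_x,v_y$ realizing an extreme correlator $C$ via $c_{xy}=\la u_x,v_y\ra$ are determined up to a single global orthogonal transformation, so all pairwise inner products $\la u_x,u_{x'}\ra$ and $\la v_y,v_{y'}\ra$ are uniquely pinned down, and hence so is the Gram matrix of the entire collection, which is exactly $\hc$. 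Combining this uniqueness with the bootstrap step and the elliptope characterization yields the claimed equivalence.
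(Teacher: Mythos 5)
Your proposal is correct and follows essentially the same route as the paper: Tsirelson's necessary conditions (Theorem~\ref{thm:extconditions}$(iii)$) supply uniqueness of the completion for the forward direction, the Li--Tam characterization of $\ext(\calE_n)$ via $\rank(E\circ E)=\binom{\rank E+1}{2}$ supplies the rank condition, and a projection argument handles the converse. The only differences are cosmetic: you re-derive the elliptope extremality criterion from the facial structure of the PSD cone instead of citing \cite{LT94}, and you give a direct convexity argument for sufficiency in place of the paper's appeal to the fact that $C\in\ext(\co(n,m))$ iff $\Pi^{-1}(C)\cap\calE_{n+m}$ is a face.
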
 
\begin{proof}
The forward implication is a consequence of  
 Theorem~\ref{thm:extconditions} $(iii)$, combined with  the following characterization of extreme points of the elliptope~\cite{LT94}:
\be\label{extelliptope}
E\in \ext(\calE_n) \Longleftrightarrow  \rank(E\circ E)=\binom{\rank(E)+1}{2}.
\ee

For the converse direction,  by assumption we have that $\Pi^{-1}(C)\cap \calE_{n+m}=\{\hc\}$, where the map $\Pi$ was defined in \eqref{eq:projection}. Furthermore, the rank assumption on $\hat{C}$ combined with \eqref{extelliptope} imply that $\hc \in\ext(\calE_{n+m})$. Since $C\in \ext(\co(n,m))$ if and only if $\Pi^{-1}(C)$ is a face of $\calE_{n+m}$, e.g. see \cite[Lemma 2.4]{NLV}, we conclude that $C\in \ext(\co(n,m))$ (as an extreme point is a face).
\end{proof}
Illustrating the usefulness of  Theorem~\ref{extpointschar}, we   now show the extremality   of various  quantum correlation matrices. 

The only other  technique available in the literature for showing  extremality of a quantum correlator is via   the notion of self-testing. 
Specifically, it was shown in ~\cite[Proposition C.1]{geometry} that a full correlator  $ (c_x,c_y,c_{xy})$ which is a self-test,  is also necessarily an extreme point of the set of {\em full} correlators. 
It is easy to verify that this argument remains vaild for correlators, i.e., 
if $C\in \co(n,m)$ is a self-test, it is also an  extreme point of $\co(n,m).$

\medskip

\noindent {\bf Example 1:} The CHSH correlator 
$
C={1\over \sqrt{2}}\left(\begin{smallmatrix}
1 & 1\\1 & -1
 \end{smallmatrix}\right),
 $
is well-known to be a self-test (e.g. see \cite[Theorem 4.1]{outlook}), and thus, it is an extreme point of $\co(2,2)$.  To recover this by Theorem \ref{extpointschar},  we first show that the partial matrix
 \be\label{cervrgrt}
 \left( \begin{smallmatrix}
 1 & a & {1/ \sqrt{2}}& {1/ \sqrt{2}}\\
 a & 1 & {1/ \sqrt{2}} & -{1/ \sqrt{2}}\\
 {1/ \sqrt{2}} & {1/ \sqrt{2}}& 1 & b\\
 {1/ \sqrt{2}}& -{1/ \sqrt{2}}& b & 1
 \end{smallmatrix}\right)
 \ee 
 admits a unique  PSD completion. Indeed, consider an arbitrary completion and let  $x_1,x_2,y_1,y_2$ be the vectors in a  Gram decomposition. Define 
 $z_+={x_1+x_2\over \sqrt{2}}$ and 
 $ z_-={x_1-x_2\over \sqrt{2}}.$
 Clearly, $\la z_+,z_-\ra=0$. Furthermore, $\la y_1,z_+\ra=1$ and $\la y_1,z_-\ra=0$. Since $\|y_1\|=1$ it follows that  $y_1={z_+\over \|z_+\|},$
 and in the same manner   we get that 
  $y_2={z_-\over \|z_-\|}.$
This implies that $b=\la y_1,y_2\ra=0$. Similarly, we get that $a=0$. 
Thus the unique PSD completion is 
 \be\label{chshunique}
 \hc= \left(\begin{smallmatrix}
 1 & 0 & 1/ \sqrt{2}&  1/ \sqrt{2}\\
 0 & 1 & 1/ \sqrt{2} & -1/ \sqrt{2}\\
 1/ \sqrt{2} &  1/ \sqrt{2}& 1 & 0\\
 1/ \sqrt{2}& - 1/ \sqrt{2}& 0 & 1
 \end{smallmatrix}\right).
 \ee 
 Lastly, as $\rank(\hc)=2$ and $\rank(\hc\circ \hc)=3,$ it follows by Theorem \ref{extpointschar} that $C\in \ext(\co(2,2))$. 
 
 \medskip 
 \noindent {\bf Example 2:} The Mayers-Yao correlator  \cite{MY}
\be\label{MYcorr}
C=\left(\begin{smallmatrix}
1 & 0 &1/\sqrt{2}\\ 
0 & 1& 1/\sqrt{2}\\ 
1/\sqrt{2}& 1/\sqrt{2}& 1\end{smallmatrix}\right),
\ee
is a self-test (e.g. see \cite[Theorem 4.2]{outlook}), and thus, it  is an extreme point of $\co(3,3)$.
To recover this by Theorem~\ref{extpointschar} we first  check that the corresponding partial matrix 
\be\label{sdvrthr}
\left(\begin{smallmatrix}
1 & a&  b &  1 & 0 &1/\sqrt{2}\\
a & 1  & c & 0 & 1& 1/\sqrt{2}\\
b& c& 1& 1/\sqrt{2}& 1/\sqrt{2}& 1\\
1 & 0 & 1/\sqrt{2} &  1& d& e\\
0 & 1 & 1/\sqrt{2}&  d& 1 & f\\
1/\sqrt{2}& 1/\sqrt{2}  & 1& e & f& 1  
\end{smallmatrix}\right),
\ee
admits  a unique PSD completion. To see this, consider an arbitrary PSD  completion  and let $x_1, x_2, x_3,  y_1, y_2, y_3$ be a  Gram decomposition. Since $\|x_1\|=\|y_1\|=1$ and $\la x_1,y_1\ra=1$, we  have that $x_1= y_1$. Similarly, we get that  $x_2=y_2$. These two conditions imply that
\be
\begin{aligned}
a&=\la x_1,x_2\ra=\la x_1,y_2\ra=0,\\
b& =\la x_1,x_3\ra=\la y_1,x_3\ra=1/\sqrt{2},\\
c & =\la x_2,x_3\ra=\la y_2,x_3\ra=1/\sqrt{2},\\
d&=\la y_1,y_2\ra=\la x_1,x_2\ra=0,\\
e & =\la y_1,y_3\ra=\la x_1,x_3\ra=1/\sqrt{2},\\
f& =\la y_2,y_3\ra=\la x_2,y_3\ra=1/\sqrt{2}.
\end{aligned}
\ee 
Summarizing,  the unique PSD completion of $C$ is
\be\label{MY}
\hc=\left(\begin{smallmatrix}
1 &0 &  1/\sqrt{2} &  1 & 0 &1/\sqrt{2}\\
0 & 1  & 1/\sqrt{2}& 0 & 1& 1/\sqrt{2}\\
1/\sqrt{2}& 1/\sqrt{2}& 1& 1/\sqrt{2}& 1/\sqrt{2}& 1\\
1 & 0 & 1/\sqrt{2} &  1& 0& 1/\sqrt{2}\\
0 & 1 & 1/\sqrt{2}&  0& 1 & 1/\sqrt{2}\\
1/\sqrt{2}& 1/\sqrt{2}  & 1& 1/\sqrt{2}& 1/\sqrt{2}& 1  
\end{smallmatrix}\right).
\ee
Lastly, since $\rank(\hc)=2$ and $\rank(\hc\circ \hc)=3$  it follows by Theorem \ref{extpointschar} that  $C\in\ext(\co(2,3))$.

\medskip

\noindent {\bf Example 3:} The quantum correlator
$$C={1\over 2}\begin{pmatrix}
1 & 1\\1 & -2
\end{pmatrix}.$$
is a self-test \cite{Scarani}, and thus, an extreme point of $\co(2,2)$. It can be easily checked that the corresponding partial matrix has a  unique PSD completion given by 
$$\hc=\left(\begin{smallmatrix}
1 & -1/2 & 1/2 & 1/2\\
-1/2 & 1 & 1/2 & -1\\
1/2 & 1/2 & 1 & -1/2\\
1/2 & -1 & -1/2 & 1
\end{smallmatrix}\right).$$
As $\rank(\hc)=2$  and $\rank(\hc\circ \hc)=3$, it follows that $C\in \ext(\co(2,2))$.

\medskip
\noindent {\em Extreme points of $\co(2,2)$.}
In this section we   give an explicit characterization of the extreme points of $\co(2,2)$, in terms of  the angle parametrization from  Theorem \ref{thm:main}. 


\begin{thm}\label{thm:analytic} Let  $C=(c_{xy})\in \co(2,2) $ and define  $\theta_{xy}=\arccos(c_{xy})\in [0,\pi]$ for all $x\in \{1,2\}, y\in \{3,4\}$. 
\begin{itemize}
\item[$(i)$] If $\rank(C)=1,$ then $C$ is extreme iff it is local deterministic, i.e., $C=xy^\top, $ for $x,y\in \{\pm 1\}^2$.
\item[$(ii)$] If $\rank(C)=2,$ then $C$ is extreme iff it saturates exactly one of the  inequalities 
$$0\le \sum_{xy\ne x'y'}\theta_{xy}-\theta_{x'y'}\le 2\pi, $$
and at most one of the  inequalities  
$$0\le \theta_{xy}\le \pi,$$
where $ x,x'\in \{1,2\},\  y,y'\in \{3,4\}.$ 

\end{itemize}
\end{thm}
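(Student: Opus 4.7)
The plan is to deduce both parts from Theorem~\ref{extpointschar}, which characterizes $\ext(\co(2,2))$ by uniqueness of the PSD completion $\hc\in\calE_{4}$ together with the rank condition $\rank(\hc\circ\hc)=\binom{\rank(\hc)+1}{2}$. Fixing a Tsirelson representation $c_{xy}=\la u_x,v_y\ra$ with unit vectors, any PSD completion is parametrized by the two unknowns $(a,b)=(\la u_1,u_2\ra,\la v_1,v_2\ra)$, so uniqueness amounts to the $(a,b)$-feasibility region in $[-1,1]^2$ collapsing to a single point. The analytic description of Theorem~\ref{thm:main} via the angles $\theta_{xy}=\arccos c_{xy}$ will translate this geometric rigidity, together with the rank condition, into the stated saturation conditions.

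For part~(i), if $C=xy^\top$ with $x,y\in\{\pm 1\}^2$, then $C$ is a vertex of the cube $[-1,1]^{2\times 2}$ and hence extreme in $\co(2,2)\subseteq [-1,1]^{2\times 2}$. Conversely, suppose $C$ is rank one with some entry $c_{x_0 y_0}\in(-1,1)$; then in any factorization $C=xy^\top$ at least one of $|x_{x_0}|,|y_{y_0}|$ is strictly below $1$, and perturbing that coordinate by $\pm\varepsilon$ yields two distinct rank-one matrices $C_\pm\in[-1,1]^{2\times 2}$ with $C=(C_++C_-)/2$. Each such $C_\pm$ admits the explicit Tsirelson embedding $u_x=x_xe_1+\sqrt{1-x_x^2}\,e_{1+x}$, $v_y=y_ye_1+\sqrt{1-y_y^2}\,e_{3+y}$ in $\R^5$, so $C_\pm\in\co(2,2)$ and $C$ is not extreme.

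For part~(ii), with $\rank(C)=2$ we have $\rank(\hc)\ge 2$; uniqueness of $\hc$ forces equality, since any completion of rank $\ge 3$ can be perturbed within the PSD cone along the tangent directions to the boundary, yielding a positive-dimensional feasibility region in $(a,b)$. A rank-two $\hc$ places the four vectors $u_1,u_2,v_1,v_2$ on a common unit circle in $\R^2$. Writing them as $w_i=(\cos\alpha_i,\sin\alpha_i)$, one has $(\hc\circ\hc)_{ij}=(w_i^\top w_j)^2=\la w_iw_i^\top,w_jw_j^\top\ra$, so $\hc\circ\hc$ is the Gram matrix of four rank-one symmetric matrices in $\mathrm{Sym}^2(\R^2)\cong\R^3$; its rank equals the number of distinct directions of the $w_i$ modulo sign. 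Since rank-two $C$ rules out same-party parallels (either $u_1=\pm u_2$ or $v_1=\pm v_2$ would collapse $C$ to rank one), the only way to lose a direction is a cross-party coincidence $u_x=\pm v_y$, i.e.\ the saturation of a box inequality, so the rank condition $\rank(\hc\circ\hc)=3$ is equivalent to at most one box inequality being saturated. For uniqueness, traversing the circle in the cyclic order of the four points produces four consecutive arcs summing to $2\pi$; expressing these four arcs in terms of the $\theta_{xy}$ (with the arc between two same-party points being $\arccos(a)$ or $\arccos(b)$) and eliminating $a,b$ yields an identity of the form $\sum_{xy\ne x'y'}\theta_{xy}-\theta_{x'y'}\in\{0,2\pi\}$ with the choice of $x'y'$ determined by which cross-party pair is diagonal in the arrangement -- that is, the saturation of exactly one cycle inequality. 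Conversely, saturating exactly one cycle inequality pins down the cyclic order (up to reflection) and so rigidifies the configuration, yielding a unique PSD completion.

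The main obstacle will be the geometric bookkeeping in part~(ii), where one must match the three distinct cyclic orderings of four labelled points on a circle with the eight one-sided saturations of cycle inequalities (four two-sided inequalities, each with two bounds), and verify that a configuration saturating exactly one cycle inequality gives a genuinely zero-dimensional $(a,b)$-feasibility region. A parallel case analysis is needed at the interface with box saturations, confirming that a single saturated box inequality preserves both uniqueness of $\hc$ and the rank condition, while two saturated box inequalities always force a same-party coincidence mod sign and hence break the rank condition.
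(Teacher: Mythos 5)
Your geometric framework — four unit vectors on a circle in $\R^2$, with the Hadamard square $\hc\circ\hc$ as the Gram matrix of the corresponding rank-one projectors on the rational normal curve — is a genuinely different and quite attractive route compared to the paper, which instead works through triangle inequalities, chordality of $K_{2,2}\cup\{34\}$, and the interval description of Lemma~\ref{cor:uniqueness}. Your direction-counting argument for the rank condition and your part~(i) perturbation argument are both sound in outline and recoverable.

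However, the proposal has two real gaps in part~(ii). First, the step ``uniqueness of $\hc$ forces $\rank(\hc)=2$'' is not established by your perturbation argument. A unique completion need not have low rank merely because higher-rank points lie on larger faces of $\calE_4$: the fiber $\Pi^{-1}(C)\cap\calE_4$ is a two-dimensional affine slice, and a two-dimensional face of $\calE_4$ can meet a two-dimensional affine subspace in a single point, so a rank-3 unique completion is not a priori excluded by a dimension count. The correct argument is different and cheaper: for a $4\times 4$ completion, $\rank(\hc\circ\hc)\le 4$, while $\binom{r+1}{2}\ge 6$ for $r\ge 3$, so the rank condition in Theorem~\ref{extpointschar} itself forces $\rank(\hc)\le 2$ (and hence $=2$) whenever $C$ is extreme. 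For the forward implication (saturation $\Rightarrow$ rank two), you must actually verify that a single tight cycle inequality collapses the four vectors into a common plane, as the paper does via its explicit expressions for $x_3,x_4$; your sketch does not carry this out.

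Second, and more fundamentally, the equivalence between ``exactly one cycle inequality saturated'' and rigidity of the configuration is the crux of the theorem, and you explicitly defer it as ``the main obstacle.'' You assert in one direction that the arc decomposition around the circle yields a single saturated cycle inequality, and in the other that one saturated cycle inequality pins down the cyclic order, but neither claim is proved. In particular, you must handle the boundary cases where a saturated cycle inequality is forced by a saturated box inequality (the paper's final case analysis involving a singular minor $\hc[2,3,4]$), show that two cycle inequalities can never be simultaneously tight without two tight boxes, and confirm that a tight cycle genuinely reduces the $(c_{12},c_{34})$-feasibility region to a point rather than a segment. Without this bookkeeping the argument is an outline, not a proof.
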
 
The case $\rank(C)=1$ is straightforward so we mainly focus  on   the case $\rank(C)=2$.  To  prove  extremality, we use the assumptions of the theorem  to prove the existence of a unique completion that satisfies   \eqref{rankcondition}. Extremality then follows by Theorem~\ref{extpointschar}. 

For the converse direction,  we  translate  the assumption  of extremality, namely unique completability and the rank condition \eqref{rankcondition}  to the level of the parameters $\ta_{xy}=\arccos(c_{xy})$. 
As it turns out, these assumptions   imply that  the unspecified parameters $\ta_{12}, \ta_{34}$  are uniquely determined in any completion. In turn, this shows  that  one cycle  inequality and at most one box inequality are tight. 
The details are given in the~Appendix.

\subsection{Verifying extremality computationally}
The  examples given in the previous section    illustrate the usefulness of the characterization of extremality given in Theorem \ref{extpointschar}. Nevertheless, it is not clear whether Theorem \ref{extpointschar} leads to an algorithm  for testing extremality, as a priori, it is not immediately  obvious  how to systematically check whether the corresponding completion problem has a unique solution. 
We  address this issue using   the rich duality theory enjoyed by SDPs, summarized  in Theorems \ref{sdpthm} and  \ref{thmdegeneracy} in the~Appendix. 


Back to the completion problem, given $C=(c_{xy})\in\co(n,m),$  its PSD completions coincide with the set of solutions of the  SDP feasibility  problem \eqref{primal:completion}.

Next, we  dualize the SDP  \eqref{primal:completion}. For this, we first  write   \eqref{primal:completion} in primal canonical form (recall \eqref{eq:primalsdp}), i.e., 
\be\label{primal:completioncanonical}
\begin{aligned}
\underset{X}{\max} & \quad 0 \\
 \text{s.t.} & \quad \la E_{xy}, X\ra=c_{xy}, \ x\in [n], y\in [m],  \\
& \quad \la E_{ii}, X \ra =1, \ i\in [n+m],\\
 & \quad X\in \mathcal{S}^{n+m}_+ ,
\end{aligned}
\ee
where $E_{xy}={1\over 2 }(e_xe_y^\top+e_ye_x^\top).$

The dual of the SDP \eqref{primal:completioncanonical}  is given by 
\be\label{dual:completion}
\begin{aligned}
\underset{\lambda,Z}{\inf} & \quad \sum_{i=1}^{n+m}\lambda_i+\sum_{x=1}^n\sum_{y=1}^m \lambda_{xy}c_{xy}\\
 \text{s.t.} & \quad  \sum_{i=1}^{n+m}\lambda_iE_{ii}+\sum_{x=1}^n\sum_{y=1}^m  \lambda_{xy}E_{xy}=Z\in \mathcal{S}^{n+m}_+.
 \end{aligned}
 \ee 
Note that the  SDP  \eqref{dual:completion} admits  a positive definite feasible solution, e.g. obtained  by  setting $\lambda_{xy}=0, \ \forall x,y$ and taking  $\lambda_i$ to be sufficiently large. 
 Furthermore, by weak duality for SDPs (cf. Theorem \ref{sdpthm}~$(i)$), we have that $0=p^*\le d^*$, i.e., $d^*>-\infty$. By strong duality for SDPs (cf. Theorem \ref{sdpthm} $(iv)$), these two properties  imply  that  the value of the dual SDP \eqref{dual:completion}    is equal to zero, i.e. $d^*=~0$. Furthermore, $d^*=0$ 
  it is clearly attained, e.g. take $\lambda_i=\lambda_{xy}=0$. Lastly, as $C\in\co(n,m)$ by assumption, the primal SDP   \eqref{primal:completion} is also attained. Thus, to show that  the SDP \eqref{primal:completion} has a unique solution, it suffices  to  exhibit a  nondegenerate optimal solution for \eqref{dual:completion}.
  
  Specializing the  definition of dual nondegeneracy for SDPs (recall  \eqref{eq:dnondeg})  to a dual feasible solution $(\lambda,Z)$ for the  SDP \eqref{dual:completion}, this is  
   equivalent to asking that $M=0$ is the only solution of the system:
  \be\label{lsystem}
  \begin{aligned}
  MZ& =0; \\
   M_{ii} & =0,\ 1\le  i \le m+n; \\
   M_{ij}& =0, \   \ 1\le i\le n, n \le j \le n+m.
   \end{aligned}
   \ee
 An important  observation  is that \eqref{lsystem} is  a  linear program in the entries of the symmetric matrix variable $M\in \mathcal{S}^{n+m}$, and thus, it is efficiently solvable. 
 
We are now ready to describe an algorithmic procedure for determining  extremality of a given  $C\in\co(n,m)$, based  on Theorem~\ref{extpointschar} and the notion of SDP nondegeneracy. 
For the convenience of the reader, the algorithm  is   summarized  in a flow chart in Fig.~\ref{fig:flow}.
 
{\em Step~1:} We solve the pair of primal-dual SDPs \eqref{primal:completioncanonical}  and~\eqref{dual:completion}, to get $X_{\mathrm{opt}}$ and $Z_{\mathrm{opt}}$, respectively. 

{\em Step~2:} We check whether  $Z_{\mathrm{opt}}$ is dual non-degenerate, i.e., we check whether $M=0$ is the only solution to the linear programming problem \eqref{lsystem} (where $Z=Z_{\mathrm{opt}}$.)


{\em Step~3a:}  If $Z_{\mathrm{opt}}$ is non-degenerate, then $X_{\mathrm{opt}}$ is the unique solution of the primal SDP \eqref{primal:completioncanonical} by Theorem \ref{thmdegeneracy}. Lastly,  we check whether $\rank(X_{\mathrm{opt}} \circ X_{\mathrm{opt}}) = \binom{\rank(X_{\mathrm{opt}})+1}{2}$; If this holds then $C$ is extreme, and if it fails,  $C$ is not extreme. 

{\em Step~3b:} If $Z_{\mathrm{opt}}$ is degenerate and 
\be\label{scomp}
\rank(X_{\mathrm{opt}}) + \rank(Z_{\mathrm{opt}}) = m+n,
\ee
 we conclude that $C$ is {\em not extreme}. Indeed, if $C$ was extreme, by Theorem~\ref{extpointschar},   $ X_{\mathrm{opt}}$ would be the unique solution of the SDP \eqref{primal:completioncanonical}.  As $X_{\mathrm{opt}},Z_{\mathrm{opt}}$ satisfy 
 \eqref{scomp}, 
   by Theorem \ref{thmdegeneracy} $(ii)$ the matrix $ Z_{\mathrm{opt}}$ would be  dual nondegenerate optimal solution, a contradiction. 

{\em Step~3c:} If $Z_{\mathrm{opt}}$ is degenerate and 
\be\label{scomp2}
\rank(X_{\mathrm{opt}}) + \rank(Z_{\mathrm{opt}}) < m+n,
\ee
our procedure is inconclusive.

Note that by weak duality for SDPs, we always have that $\rank(X_{\mathrm{opt}}) + \rank(Z_{\mathrm{opt}}) \le  m+n$. Thus, condition \eqref{scomp} fails if and only if condition 
\eqref{scomp2} holds.




We implemented this procedure on MATLAB\textregistered \hspace{2pt} using the YALMIP package and Mosek  as the solver  \cite{implementation}. Furthermore, we tested the performance of the procedure on randomly generated  extremal points of  $\co (2,2)$. Specifically, with \texttt{randExtremeCorr22.m}, we generate a  random point in $\mathrm{ExtCor}(2,2)$  by randomly picking three angles, $\theta_1, \theta_2, \theta_3 \in (0,\pi)$, and setting  $\phi=\theta_1+\theta_2+\theta_3$. If $\phi<\pi$ or $2\pi<\phi<3\pi$, we set the fourth angle $\theta_4=\phi$, otherwise, we discard this instance. Then, by Theorem~\ref{thm:analytic}, the   correlator corresponding to $ \cos(\theta_1,\theta_2,\theta_3,\theta_4)$ is extremal in $\co(2,2)$. We applied our prodecure, called \texttt{extremeCorr.m}, on 1000 extremal points generated by \texttt{randExtremeCorr22.m}. In all instances, our algorithm correctly detected  that the generated points are indeed extreme. 

\begin{figure}[h]
\begin{center}
\includegraphics[width=8cm]{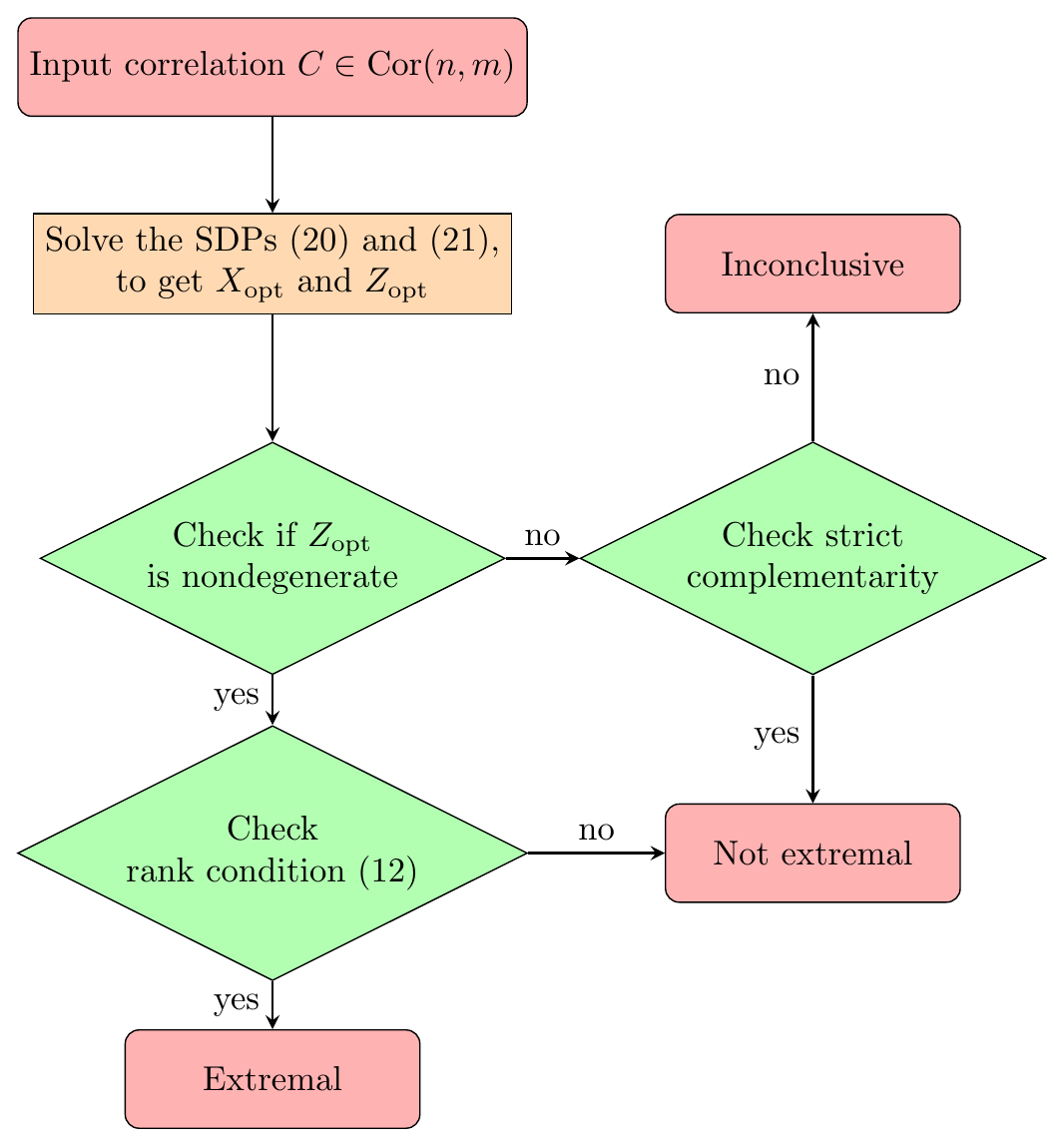} 
\caption{A flow chart describing the algorithmic procedure for determining extremality in $\co(n,m)$.}
\label{fig:flow}
\end{center}
\end{figure}


\subsection{\label{sec:extreme-selftest}Operational interpretation of extremality}
It turns out that the geometric concept of extremality has a nice operational interpretation. We now explain the connection with the task of self-testing, which has been hinted several times in the previous sections.

Self-testing, also referred to as device-independent characterization of the state and the measurements, or simply blind tomography, captures the idea  that certain  correlations between  space-like separated parties predicted by quantum theory determine the state and the measurement  up to local isometries  and other irrelevant degrees of freedom. 

 The term self-testing was  introduced   in the  work by Mayers and Yao~\cite{MY}. Nevertheless, the idea underlying self-testing was rediscovered  earlier numerous   times in the literature, for example in the works of Tsirelson \cite{TS87}, Summers-Werner \cite{SW}, and Popescu-Rohrlich \cite{PR}. The interested reader is  referred to \cite{outlook} for a general survey, and \cite{allpure,jed1} for  more recent developments. 
 
To  formally define self-testing, recall that a quantum correlator $C=(c_{xy})\in \co(n,m)$ can be written as 
\begin{align}\label{real}
c_{xy} = \psi^\dagger(A_x\otimes B_y)\psi, \,\,\text{ for all } x\in [n], y\in [m],
\end{align}
where  $\psi$ is a unit vector in $ \mathcal{H}_A\otimes \mathcal{H}_B$,  and the Hermitian operators   $A_x:~{\calH_A\to \calH_A}$, $B_y:~{\calH_B\to \calH_B}$   have eigenvalues  in~$[-1,1]$.  Any  ensemble $(\calH_A, \calH_B, \psi, \{A_x\}_x, \{B_y\}_y)$ with these properties that satisfies  \eqref{real} is called a  quantum realization of~$C$. 

Each  quantum realization $(\calH_A, \calH_B, \psi, \{A_x\}_x, \{B_y\}_y)$ of $C$  induces an entire orbit of realizations, obtained by applying local isometries and/or introducing irrelevant degrees of freedom. Concretely, given two   isometries $V_A:~\calH_A\to \calH_{A'}$ and $V_B: \calH_B\to \calH_{B'}$ (i.e., $V_A^\dagger V_A=1_{\calH_{A}})$ and $V_b^\dagger V_b=1_{\calH_{B}}$), we have that  the ensemble 
\be\label{ens:iso}
(\calH_{A'}, \calH_{B'}, (V_A\otimes V_B) \psi, \{V_AA_xV_A^\dagger\}_x, \{V_BB_yV_B^\dagger\}_y),
\ee is another quantum realization of $C$ since, 

$$\begin{aligned}
c_{xy}= &\psi^\dagger(A_x\otimes B_y)\psi=\\
=& \psi^\dagger (V_A^\dagger\otimes V_B^\dagger)(V_AA_xV_A^\dagger\otimes V_BB_yV_B^\dagger) (V_A\otimes V_B) \psi.
\end{aligned}
$$
Furthermore,   for any $\psi''\in \calH_{A''}\otimes \calH_{B''}$, the ensemble 
\be\label{ens:redundant}
(\calH_A\otimes \calH_{A''}, \calH_B\otimes \calH_{B''}, \psi\otimes \psi'', \{A_x\otimes 1_{{\calH}_{A''}}\}_x, \{B_y\otimes 1_{{\calH}_{B''}}\}_y),
\ee
is another quantum realization of $C$ since
$$ 
\begin{aligned}
c_{xy}= &\psi^\dagger(A_x\otimes B_y)\psi=\\
=& (\psi\otimes \psi'')^{\dagger}((A_x\otimes 1_{{\calH}_{A''}})\otimes (B_y\otimes 1_{{\calH}_{B''}}))(\psi\otimes \psi'').
\end{aligned}
$$
We say that the correlation $C$ self-tests the ensemble $(\calH_A, \calH_B, \psi, \{A_x\}_x, \{B_y\}_y)$ if all other quantum realizations of $C$ are of the form \eqref{ens:iso} or \eqref{ens:redundant} (or their combination), i.e.,  for any other quantum realization $(\calH_{A'}, \calH_{B'}, \psi, \{A'_x\}_x, \{B'_y\}_y)$ there exist   isometries $V_A:~\calH_{A}\otimes \calH_{A''}\to \calH_{A'}$ and $V_B: \calH_{B}\otimes \calH_{B''}\to  \calH_{B'}$ and a unit vector  $\psi''\in \calH_{A''}\otimes \calH_{B''}$ such that 
\be
\begin{aligned}
\psi'&=(V_A\otimes V_B)(\psi\otimes \psi''),\\
A_x' &=V_A(A_x\otimes 1_{\calH_{A''}})V_A^\dagger,\\
B_y' &=V_B(B_y\otimes 1_{\calH_{B''}})V_B^\dagger.
\end{aligned}
\ee
Furthermore, we say that $C$ is a self-test if it self-tests some   quantum realization   $(\calH_A, \calH_B, \psi, \{A_x\}_x, \{B_y\}_y)$.

In the following theorem we give a geometric characterization of self-testing for the special case of $\co(2,2)$. 
\begin{thm}\label{thm:extreme-selftest}Let $C\in \co(2,2)$ with $\rank(C)=2$. The following are equivalent:
\bi 
\item[$(i)$] $C$ is an extreme point of $\co(2,2)$.
\item[$(ii)$]  $C$ self-tests the singlet. 
\item[$(iii)$] $C$ is a self-test. 
\ei 
\end{thm}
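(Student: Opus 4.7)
The plan is to prove the cycle (ii) $\Rightarrow$ (iii) $\Rightarrow$ (i) $\Rightarrow$ (ii). The first implication is immediate from the definitions, and (iii) $\Rightarrow$ (i) is the observation recorded at the end of Section~\ref{sec:extreme} (following \cite[Proposition~C.1]{geometry}). All the work lies in (i) $\Rightarrow$ (ii).

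First I would use Theorem~\ref{extpointschar} to build a canonical singlet realization of $C$. The rank constraint $\rank(\hat C\circ\hat C)=\binom{r+1}{2}$ with $r=\rank(\hat C)$ combined with $r\ge\rank(C)=2$ and $\binom{r+1}{2}\le 4$ forces $r=2$, so Gram factorization of $\hat C$ yields unit vectors $u_1,u_2,v_1,v_2\in\R^2$, unique up to a common $O(2)$-rotation, satisfying $c_{xy}=\la u_x,v_y\ra$; the rank-two assumption further makes both $\{u_1,u_2\}$ and $\{v_1,v_2\}$ linearly independent. Defining $\tilde A_x=(u_x)_1\sigma_Z+(u_x)_2\sigma_X$, $\tilde B_y=(v_y)_1\sigma_Z+(v_y)_2\sigma_X$ on $\C^2$ and $\psi_0=(\ket{00}+\ket{11})/\sqrt 2$, the identity $\la\psi_0|M\otimes N|\psi_0\ra=\tr(MN^\top)/2$ verifies $\la\psi_0|\tilde A_x\otimes\tilde B_y|\psi_0\ra=c_{xy}$; $\psi_0$ is maximally entangled, hence locally unitarily equivalent to the singlet.

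Next I would show every other realization reduces to this canonical one. Given $(\calH_A,\calH_B,\psi',\{A'_x\},\{B'_y\})$, Jordan's lemma applied separately to $\{A'_1,A'_2\}$ and $\{B'_1,B'_2\}$ produces decompositions $\calH_A=\bigoplus_i\calH_A^{(i)}$, $\calH_B=\bigoplus_j\calH_B^{(j)}$ into summands of dimension one or two, such that (after local unitaries) every two-dimensional summand carries the Pauli form $A'_x|_i=\alpha_{x,1}^{(i)}\sigma_Z+\alpha_{x,2}^{(i)}\sigma_X$ (and analogously for Bob). Writing $\psi'=\sum_{ij}\sqrt{p_{ij}}\,\tilde\psi'_{ij}$ blockwise exhibits $C=\sum_{ij}p_{ij}C^{(ij)}$ with $C^{(ij)}\in\co(2,2)$; extremality forces $C^{(ij)}=C$ whenever $p_{ij}>0$, and since one-dimensional blocks on either side produce rank-one correlators, they contribute zero weight. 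On each surviving $(i,j)$ block, the uniqueness of the Gram factorization from the previous step lets us further absorb an $O(2)$-rotation into the Jordan unitaries so that the block observables coincide with $\tilde A_x$ and $\tilde B_y$.

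Finally, with the observables on every nontrivial block pinned to the canonical Pauli form, the constraint $\la\tilde\psi'_{ij}|\tilde A_x\otimes\tilde B_y|\tilde\psi'_{ij}\ra=\la u_x,v_y\ra$ expands bilinearly into $\sum_{a,b\in\{1,2\}}(u_x)_a(v_y)_b(\rho_{ab}-\delta_{ab})=0$, where $\rho_{ab}$ are the $XZ$-correlators $\la\tilde\psi'_{ij}|\sigma_{P_a}\otimes\sigma_{P_b}|\tilde\psi'_{ij}\ra$ with $P_1=Z$, $P_2=X$. The rank-two spanning of $\{u_x\otimes v_y\}$ in $\R^{2\times 2}$ forces $\rho_{11}=\rho_{22}=1$ and $\rho_{12}=\rho_{21}=0$. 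Purity combined with $\rho_{11}=\rho_{22}=1$ then places $\tilde\psi'_{ij}$ in the one-dimensional common $+1$ eigenspace of $\sigma_Z\otimes\sigma_Z$ and $\sigma_X\otimes\sigma_X$, namely $\C\psi_0$. Packaging the block isometries into $V_A,V_B$ and the weights $p_{ij}$ into an ancilla $\psi''$ then yields the equivalences in \eqref{ens:iso} and \eqref{ens:redundant}. The main obstacle is this last extraction of maximal entanglement from the matching of correlators; the decisive input is the rank-two spanning property of $\{u_x\otimes v_y\}$, which is what promotes ``each block realizes $C$'' to ``each block is the canonical singlet realization.''
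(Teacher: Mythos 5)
Your proof is correct in outline but takes a genuinely different route from the paper. The paper's proof is essentially a one-liner: it cites Theorem~1 of \cite{Scarani}, which characterizes in terms of tightness of certain $\arcsin$-inequalities exactly when a rank-two $C\in\co(2,2)$ self-tests the singlet, matches those inequalities to the extremality criterion of Theorem~\ref{thm:analytic}$(ii)$ via $\arccos+\arcsin=\pi/2$, and then disposes of $(ii)\Leftrightarrow(iii)$ by citing \cite[Theorem~3.2]{TS87}. You instead prove the implication cycle from scratch, with all the work in $(i)\Rightarrow(ii)$ via a Jordan-lemma reduction. What the paper's route buys is brevity and a transparent link to the angle description of the boundary; what your route buys is self-containedness (modulo Jordan's lemma) and a direct mechanism showing \emph{why} extremality pins down the realization, without appealing to an external self-testing characterization.

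One step you should flesh out: when you say ``the uniqueness of the Gram factorization from the previous step lets us further absorb an $O(2)$-rotation... so that the block observables coincide with $\tilde A_x,\tilde B_y$,'' the objects involved are different --- the Gram vectors $u_x,v_y$ of $\hat C$ live in $\R^2$, while unique completability directly constrains the (complex) Gram matrix of $(A'_x\otimes I)\tilde\psi'_{ij}$ and $(I\otimes B'_y)\tilde\psi'_{ij}$, not the Pauli coefficients $\hat u_x,\hat v_y$ of the Jordan forms. The bridge is that for Pauli-form observables $\{A'_1,A'_2\}=2\la\hat u_1,\hat u_2\ra\,I$, so the real part of the $(1,2)$ entry of that Gram matrix equals $\la\hat u_1,\hat u_2\ra$; by unique completability this must be $\hat c_{12}=\la u_1,u_2\ra$, hence the angle between $\hat u_1,\hat u_2$ matches that between $u_1,u_2$ and can be absorbed by an $O(2)$ transformation (with reflections implemented by a Pauli conjugation). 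The same applies on Bob's side. With that made explicit, your bilinear-expansion argument for maximal entanglement goes through cleanly, and the overall proof is sound.
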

\begin{proof}In  \cite[Theorem 1]{Scarani} it is shown that   a rank two correlator $C\in \co(2,2)$ self-tests the singlet if and only if, $C$ saturates exactly one of the inequalities
$$   -\pi \le \sum_{xy\ne x'y'}\arcsin(c_{xy})-\arcsin(c_{xy}) \le \pi, \ \forall x'y',$$
and at most one of the ineuqalities 
$$-\pi/2\le \arcsin(c_{xy})\le  \pi/2,\ \forall x,y,$$
where $ x,x'\in \{1,2\},\  y,y'\in \{3,4\}.$ 
Using that $\arccos(x)+\arcsin(x)={\pi\over 2}$ and $\ta_{xy}=\arccos (c_{xy}),$ the equivalence between   $(i)$ and $(ii)$ follows from Theorem~\ref{thm:analytic} $(ii)$.
Lastly, the equivalence between $(ii)$ and $(iii)$ is a special case of  \cite[Theorem 3.2]{TS87}. 
\end{proof}

\subsection{\label{sec:exposed}Exposed correlators}
An exposed face of $\co(n,m)$ is a subset $F\subseteq \co(n,m)$ for which there exists   a matrix $A\in\R^{n\times m}$ such that $F={\rm argmax}\{ \la A,X\ra: X\in \co(n,m)\}$.  A matrix $C\in \co(n,m)$ is an exposed point of $\co(n,m)$  if the singleton $\{C\}$ is an exposed face of $\co(n,m)$, i.e.,  there exists $A\in\R^{n\times m}$  such~that 
 $$\{C\}={\rm argmax}\left\{ \la A,X\ra: X\in \co(n,m)\right\}.$$

Setting $b=\max \{\la A,X\ra: X\in \co(n,m)\}$,  $C$ is an exposed point of $\co(n,m)$ if  the following two properties hold: $(i)$  $\la A,X\ra\le b, $ for all $X\in \co(n,m)$ and  $(ii)$ $ \la A, X\ra=b$ if and only if  $X=C$. In this setting, we say that  the hyperplane  $\mathcal{H}=\{X\in \R^{n\times m}: \la A,X\ra=b\}$ exposes  the point $C$.  

The exposed points of a convex set are always  extreme, but the converse is not always true. An example of such a point is the Hardy behavior \cite{hardy}, which is extreme point of the set of full behaviors (as it is a self-test \cite{hardyST}), but was recently shown to be non exposed~\cite{geometry}.

In this section, we use again SDP duality theory to give a sufficient condition for  a point $C\in \co(n,m)$ to be exposed. Our main tool is the following result. 

\begin{thm}\label{thm:exposed}
Let $C^*=(c^*_{xy})$ be an extreme point of  $\co(n,m)$  
and  $Z^*=
 \sum_{i=1}^{n+m}\lambda^*_iE_{ii}+\sum_{x=1}^n\sum_{y=1}^m \lambda^*_{xy}E_{xy}
 $
  a dual  optimal solution for \eqref{dual:completion}. 
  
\bi 
\item[$(i)$] The hyperplane
\be\label{hyperplane}
\mathcal{H} = \left\{(c_{xy}) \,:\, -\sum_{x=1}^n\sum_{y=1}^m\lambda_{xy}^*c_{xy} = \sum_{i=1}^{n+m} \lambda_i^*\right\},
\ee
supports the set $\co(n,m)$ at the point $C^*$, i.e., 
\begin{align*} 
& -\sum_{x=1}^n\sum_{y=1}^m\lambda_{xy}^*c_{xy} \le \sum_{i=1}^{n+m}\lambda_i^*, \ \forall C \in\co(n,m), \text{ and }\\
&-\sum_{x=1}^n\sum_{y=1}^m\lambda_{xy}^*c^*_{xy} = \sum_{i=1}^{n+m}\lambda_i^*.
\end{align*}
\item[$(ii)$] Furthermore, if  the homogeneous linear system 
 \be\label{system}
   MZ^* =0 \quad 
   M_{ii}  =0,\ 1\le  i \le n+m,
   \ee
   in the symmetric matrix variable $M\in \mathcal{S}^{n+m}$ has only the trivial solution $M=0$, then, the hyperplane $\mathcal{H}$ given in \eqref{hyperplane} exposes the point $C^*$. 
   \ei 
 \end{thm}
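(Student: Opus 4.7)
The plan is to derive both parts of the theorem from strong duality for the SDP pair \eqref{primal:completioncanonical}--\eqref{dual:completion}, combined with the unique completability of $C^*$ supplied by Theorem~\ref{extpointschar} and with complementary slackness.

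For part $(i)$, I first observe that the constraint of the dual SDP \eqref{dual:completion} is independent of the right-hand sides $c_{xy}$; only its objective depends on $C$. Consequently, the pair $(\lambda^*, Z^*)$ is dual feasible for the SDP \eqref{primal:completioncanonical} associated to \emph{any} $C \in \co(n,m)$, not merely for $C^*$. Weak duality then gives
\[
0 \;=\; p^*(C) \;\le\; d^*(C) \;\le\; \sum_{i=1}^{n+m}\lambda_i^* \;+\; \sum_{x=1}^n\sum_{y=1}^m\lambda_{xy}^* \, c_{xy},
\]
which upon rearrangement is exactly the claimed support inequality. For $C = C^*$, the primal SDP has value $0$ and the dual also attains value $0$ by the strong duality argument presented in the paragraph preceding \eqref{lsystem}; since $(\lambda^*,Z^*)$ is by assumption dual optimal, the two sides coincide, yielding the claimed equality that places $C^*$ on $\mathcal{H}$.

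For part $(ii)$, let $C \in \co(n,m) \cap \mathcal{H}$ and let $X$ be any PSD completion of $C$. Because $C$ lies on $\mathcal{H}$, the value of the dual objective at $(\lambda^*,Z^*)$ equals $\sum \lambda_i^* + \sum \lambda_{xy}^* c_{xy} = 0 = p^*(C)$, so $(\lambda^*,Z^*)$ is a dual optimal solution for the SDP at $C$ as well. Complementary slackness then yields $\langle X, Z^* \rangle = 0$, and since $X, Z^* \succeq 0$ this forces $X Z^* = 0$. Applying the same argument at $C^*$ with its unique PSD completion $X^*$ (whose uniqueness comes from Theorem~\ref{extpointschar}) gives $X^* Z^* = 0$. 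Setting $M = X - X^*$, the difference is symmetric, satisfies $M Z^* = 0$, and has $M_{ii} = X_{ii} - X^*_{ii} = 1 - 1 = 0$ for every $i$. By hypothesis the only such $M$ is $M=0$, so $X = X^*$, and projecting via $\Pi$ gives $C = \Pi(X) = \Pi(X^*) = C^*$, proving that $\mathcal{H}$ exposes $C^*$.

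I do not foresee a substantive obstacle: both parts fall out once the right dual pair is set up and complementary slackness is invoked. The one subtle bookkeeping point is keeping straight which constraints $M$ is forced to satisfy---only the diagonal zeros survive, because $X$ and $X^*$ may lift different bipartite off-diagonal blocks (corresponding to $C$ versus $C^*$). This is precisely why the system \eqref{system} is genuinely weaker than the dual nondegeneracy system \eqref{lsystem}, and correspondingly why exposedness is a strictly weaker condition than the unique-completability needed for extremality.
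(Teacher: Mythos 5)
Your proof is correct, and part $(i)$ is essentially the paper's argument: the observation that the dual pair $(\lambda^*,Z^*)$ remains dual feasible for every $C$ (since the dual constraint does not involve the $c_{xy}$) is just a repackaging of the paper's use of the self-duality of the PSD cone, $\langle \hat{C}, Z^*\rangle \ge 0$ for any completion $\hat{C}$ of any $C$, and the equality at $C^*$ comes from complementary slackness in both cases.

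Part $(ii)$, however, takes a genuinely different and more self-contained route. The paper reformulates the exposing program \eqref{exposed} as the SDP \eqref{SDPexposed} over the elliptope, identifies $Z^*$ as dual optimal for the reformulated dual \eqref{SDPexposeddual}, and then invokes Theorem~\ref{thmdegeneracy} (dual nondegeneracy implies a unique primal optimum) as a black box; the hypothesis \eqref{system} is precisely the dual nondegeneracy condition for that SDP. You instead bypass the nondegeneracy machinery: for any $C \in \co(n,m)\cap\mathcal{H}$ and any PSD completion $X$ of $C$, you compute that the duality gap $\langle X, Z^*\rangle$ equals the dual objective at $(\lambda^*,Z^*)$, which vanishes on $\mathcal{H}$, so $XZ^*=0$; the same holds for $\hat{C}^*$; then $M = X-\hat{C}^*$ lands exactly in the feasible set of \eqref{system}, and the hypothesis kills it. This is more elementary and makes it transparent \emph{why} \eqref{system} is the correct condition and why it is strictly weaker than the dual nondegeneracy system \eqref{lsystem} used for extremality (the bipartite off-diagonal block of $M$ is unconstrained because $X$ and $\hat{C}^*$ are completions of a priori different correlators). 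What the paper's route buys in exchange is that it situates the result cleanly within the general SDP uniqueness framework, which the authors also use elsewhere (e.g.\ in the extremality-testing algorithm), whereas your argument is a one-off direct verification. Both are valid; yours is arguably the more illuminating proof of this particular statement.
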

 
 \begin{proof}
 Recall that the solution set of \eqref{primal:completioncanonical} coincides with the set of PSD completions of $C^*$. As $C^*$ is extreme, by Theorem \ref{extpointschar}, it has a unique PSD completion $\hc^*\in \mathcal{S}^{n+m}_+$, i.e., $\hc^*$ is the unique solution of the SDP \eqref{primal:completioncanonical}.  
 

We have already seen   that the values of \eqref{primal:completioncanonical} and \eqref{dual:completion} coincide,  and  both are attained. Consequently, as $\hc^*$ and $Z^*$ are  primal-dual optimal, we have  $\la \hc^*, Z^*\ra=0$ (cf. Theorem \ref{sdpthm} $(iii)$). Expanding 
this we~get $$\sum_{i=1}^{n+m} \lambda_i^*+\sum_{x=1}^n\sum_{y=1}^m\lambda_{xy}^*c_{xy}^*=0.$$
Lastly, consider  an arbitrary $C\in \co(n,m)$ and let $\hc$ be one of its  PSD completions.  As the PSD cone is self-dual we get  that   $\la \hc ,Z_*\ra\ge 0 $,   and expanding, this gives 
$$\sum_{i=1}^{n+m} \lambda_i^*+\sum_{x=1}^n\sum_{y=1}^m\lambda_{xy}^*c_{xy}\ge 0,$$
which shows that $\mathcal{H}$ supports $\co(n,m)$ at $C^*$. 
Equivalently, $C^*$ is an optimal solution  of the~program:
\be\label{exposed}
\max\left\{ -\sum_{x=1}^n\sum_{y=1}^m\lambda_{xy}^*c_{xy} : \ C\in \co(n,m)\right\}.
\ee

Next, we further assume that  the linear system \eqref{system} admits only the trivial solution. To show that the hyperplane $\mathcal{H}$ exposes $C^*$, it suffices to show that  $C^*$ is the unique optimal solution of \eqref{exposed}. To do this, we first write  \eqref{exposed} as an SDP in  primal canonical form. 
Recalling that $\co(n,m)=\Pi(\calE_{n+m})$, it immediately follows  that \eqref{exposed} is equivalent to the SDP:
\be\label{SDPexposed}
\begin{aligned}
 \underset{X}{\max} \quad  & \la {\Lambda_b^*},X\ra  \\
 \text{ s.t.} \quad &  \hat{C}_{ii}=1, \ 1\le i\le n+m,\\
& \ X\in \mathcal{S}^{n+m}_+,
\end{aligned}
\ee
where  $\Lambda^*_{xy}=-\lambda_{xy}^*, \ \forall x\in [n], y\in [m]$,  and  
${\Lambda_b^*} =\left(\begin{smallmatrix} \Huge{0}_{n\times n} & {\Lambda^*\over 2}\\ {\Lambda^*\over 2}^\top & \Huge{0}_{m\times m}\end{smallmatrix}\right)$.
  The dual of \eqref{SDPexposed} is given by:
\be\label{SDPexposeddual}
\begin{aligned}
\underset{\lambda,Z}{\min} \ & \sum_{x=1}^n \lambda_x+\sum_{y=n+1}^{n+m} \mu_y \\
\text{ s.t.} \   &  \sum_{x=1}^n \lambda_xE_{xx}+\sum_{y=n+1}^{n+m} \mu_yE_{yy}-{\Lambda_b^*}=Z\in \mathcal{S}^{n+m}_+.
\end{aligned}
\ee
As the primal \eqref{SDPexposed} is strictly feasible and upper bounded, there exists  no duality gap and the dual is attained,  cf.  Theorem \ref{sdpthm} $(iv)$. To show that $C^*$ is exposed it remains to show that $C^*$ is the unique optimal solution of \eqref{SDPexposed}. For this, by   Theorem \ref{thmdegeneracy},  it suffices to show that the dual SDP \eqref{SDPexposeddual} has a nondegenerate optimal solution.

By the definitions  of $Z^*$ and ${\Lambda_b^*}$ we have that 
$Z^*=
 \sum_{i=1}^{n+m}\lambda^*_iE_{ii}-{\Lambda_b^*},$
 i.e., $Z^*$ is dual feasible for \eqref{SDPexposeddual}. Furthermore, as $\la \hc^*, Z^*\ra=0$, and $ \hc^*, Z^*$ are primal-dual feasible for \eqref{SDPexposed} and \eqref{SDPexposeddual} respectively, they are primal-dual optimal. Lastly, the assumption \eqref{system} implies that $Z^*$ is dual nondegenerate, and the proof is~concluded. 
  \end{proof}

Next, we illustrate the usefulness of Theorem~\ref{thm:exposed} by two concrete examples, followed by  a summary and the conclusions of our computational~work.

%
%
%
%
%

\medskip 
\noindent {\bf Example 4:} 
We show that the hyperplane 
$$c_{11}+c_{12}+c_{21}-c_{22}\le {2 \sqrt{2}},$$
exposes the CHSH correlator.
 We have already seen that the matrix $\hc$ given in  \eqref{chshunique} is the unique optimal solution for \eqref{primal:completioncanonical}. As $\rank(\hc)=2$, the nullspace of $\hc$ has dimension two, and a linear basis is given by 
 $$v_1=({1\over \sqrt{2}},{1\over \sqrt{2}},-1,0)^\top \quad v_2=({1\over \sqrt{2}},-{1\over \sqrt{2}},0,-1)^\top.$$
Using these two vectors we define
$$ Z^*=v_1v_1^\top+v_2v_2^\top=\left(\begin{smallmatrix}
 1 & 0 & -{1\over \sqrt{2}}& -{1\over \sqrt{2}}\\
 0&  1& -{1\over \sqrt{2}}& {1\over \sqrt{2}}\\
- {1\over \sqrt{2}} & -{1\over \sqrt{2}} &  1& 0\\
 -{1\over \sqrt{2}} & {1\over \sqrt{2}} & 0 &  1
 \end{smallmatrix}\right).
 $$
 Next, we show that  $Z^*$ is dual optimal for \eqref{dual:completion}. Indeed, by construction $Z^*$  is  feasible for \eqref{dual:completion}, and satisfies $\la \hc, Z^*\ra=\la \hc,v_1v_1^\top\ra+\la \hc,v_2v_2^\top\ra=0$. As $\hc$ is  optimal for \eqref{primal:completion}, Theorem \eqref{sdpthm} $(iii)$ implies that $Z^*$ is dual optimal. 
 
 Having established that $Z^*$ is dual optimal,  Theorem~\ref{thm:exposed} implies that  the~hyperplane 
 $c_{11}+c_{12}+c_{21}-c_{22}\le {2 \sqrt{2}},$
 supports  $\co(2,2)$  at the CHSH correlator. Lastly, to prove  that this hyperplane exposes 
 the CHSH correlator, by Theorem \ref{thm:exposed} $(ii)$, it suffices to show that the homogeneous linear system \eqref{system} only admits the trivial solution. A straightforward calculation reveals this is  the case.

 \medskip 
\noindent {\bf Example 5}. 
We show that the hyperplane
\be\label{myhyperplane}
\begin{aligned}
 -12\sqrt{2}c_{14}&+4c_{15}-4\sqrt{2}c_{16}+4c_{24}-12\sqrt{2}c_{25}-4\sqrt{2}c_{26}\\-4\sqrt{2}c_{34}
&-4\sqrt{2}c_{35}+2(2-3\sqrt{2})c_{36}\le 6(5\sqrt{2}+2),
\end{aligned}
\ee
 exposes the Mayers-Yao correlator    \eqref{MYcorr}.
 In Example 2, we showed  that   the SDP \eqref{primal:completioncanonical} has the unique solution 
\be
X^*=\left(\begin{smallmatrix}
1 &0 &  1/\sqrt{2} &  1 & 0 &1/\sqrt{2}\\
0 & 1  & 1/\sqrt{2}& 0 & 1& 1/\sqrt{2}\\
1/\sqrt{2}& 1/\sqrt{2}& 1& 1/\sqrt{2}& 1/\sqrt{2}& 1\\
1 & 0 & 1/\sqrt{2} &  1& 0& 1/\sqrt{2}\\
0 & 1 & 1/\sqrt{2}&  0& 1 & 1/\sqrt{2}\\
1/\sqrt{2}& 1/\sqrt{2}  & 1& 1/\sqrt{2}& 1/\sqrt{2}& 1  
\end{smallmatrix}\right).
\ee
Note  that  $\rank(X^*)=2$, and  in fact, its  column space is  spanned by the first two  columns.  Thus, its nullspace has dimension four, and a basis  is given by the vectors:
$$\begin{aligned}
v_1&=(-1,-1,-1,1,1,1)^\top, \\
v_2&=(-1,1,0,1,-1,0)^\top,\\
v_3&= (1,1,- \sqrt{2}, 1,1, - \sqrt{2})^\top,\\
v_4 &=(1,1,-1,-1,-1,1)^\top.
\end{aligned}$$
 Using these vectors we define,
\begin{align*}
 Z^*&=2\sqrt{2}v_1v_1^\top+(3\sqrt{2}+1)v_2v_2^\top+v_3v_3^\top+\sqrt{2}v_4v_4^\top\\
&=\left(\begin{smallmatrix}
 2(3\sqrt{2}+1) & 0 & 0 & -6\sqrt{2} & 2 & -2\sqrt{2}\\
 0 & 2(3\sqrt{2}+1) & 0& 2&-6\sqrt{2} & -2\sqrt{2}\\
 0 & 0& 3\sqrt{2}+2 & -2\sqrt{2} & -2\sqrt{2} & 2-3\sqrt{2}\\
 -6\sqrt{2} & 2 & -2\sqrt{2} & 2(3\sqrt{2}+1)& 0 & 0\\
 2 & -6\sqrt{2} & -2\sqrt{2} & 0 & 2(3\sqrt{2}+1)& 0\\
 -2\sqrt{2} & -2\sqrt{2} & 2-3\sqrt{2} & 0 & 0& 3\sqrt{2}+2
 \end{smallmatrix}\right).
 \end{align*}
 
 By construction, $Z^*$ is positive semidefinite,  feasible for \eqref{dual:completion},  and  satisfies  $\la X^*, Z^*\ra=~0.$ Consequently, by Theorem \ref{sdpthm} $(iii)$ we get that $Z^*$ is dual optimal for \eqref{dual:completion} and thus, by 
 Theorem~\ref{thm:exposed} $(i)$,  we see that \eqref{myhyperplane} is a valid hyperplane for $\co(3,3)$.  It remains to  show that the hyperplane \eqref{myhyperplane}  exposes the Mayers-Yao correlator. For this,  by Theorem~\ref{thm:exposed} $(ii)$, suffices to show that the linear system \eqref{system} only admits the trivial solution. An easy calculation shows that this is indeed the case.

 \medskip \noindent {\em Verifying exposedness computationally. } Theorem \ref{thm:exposed} leads to an algorithm  for checking whether  a given extremal correlator $C$  is exposed. This is summarized below:
  
{\em Step~1.} Solve the SDP   \eqref{dual:completion} to find an optimal solution $Z^*=
 \sum_{i=1}^{n+m}\lambda^*_iE_{ii}+\sum_{x=1}^n\sum_{y=1}^m \lambda^*_{xy}E_{xy}
 $. 
 
  {\em Step~2.} Solve the SDP \eqref{SDPexposeddual} to find an optimal solution $Z$. If $Z$ is non-degenerate then $C$ is exposed. If $Z$ is degenerate, the test is inconclusive. 

We  implemented this procedure  on 1000  randomly generated extremal correlators from $\co(2,2)$. In all instances, our  algorithm  concluded that the corresponding correlators were also exposed.  The algorithm for generating the random instances is implmented  \texttt{randExtremeCorr22.m} and the entire procedure is implemented in  \texttt{exposedCorr.m}.  Our computations suggest that for $\co(2,2)$, most  extreme points are also exposed~\cite{implementation}.

\section{Conclusions and future work}
In this paper we studied geometric features  of the set of quantum correlators using  semidefinite programming.
Our starting point is that the set of quantum correlations can be seen as the projection of the feasible region of a semidefinite program, known as the elliptope. This  connection leads to  a characterization of its boundary, which generalizes  the well-known Tsirelson-Landau-Masanes  criterion (Theorem~\ref{thm:main}). Furthermore, based on this connection, we  were able to translate  results  concerning  the geometry of elliptopes to the set of quantum correlations. The first  question we considered was to  characterize   its extreme points,  or equivalently its zero-dimensional faces. We managed to  give a  complete characterization by making a link to  the positive semidefinite  matrix completion problem (Theorem \ref{extpointschar}). 
Furthermore, for the simplest Bell scenario we determined an explicit characterization of its extreme points (Theorem~\ref{thm:analytic}). Next, we gave a sufficient condition for a correlator  to be  the exposed   
 (Theorem~\ref{thm:exposed}). Our numerical experiments  suggest  that most nonlocal extreme points of $\co(2,2)$ are in fact exposed. Lastly,  we show that in the simplest Bell scenario,   the geometric property of extremality coincides with the operational task of self-testing (Theorem~\ref{thm:extreme-selftest}).

Our investigations in this paper naturally lead to several future directions:
\begin{itemize}
\item Can one obtain further analytic characterizations for scenarios not captured by Theorem~\ref{thm:main}? Can one generalize to multipartite correlation scenarios? 
\item What is the facial structure of the set of quantum~correlations? 
\item In the set of full quantum  behaviors, is extremality still equivalent to self-testing? If not, is extremality equivalent to self-testing with global isometries?
\end{itemize}
The first two questions are evident; let us comment on the third one. Here {\em self-testing with global isometries} is a similar notion to self-testing, but with the ``gauge" equivalence being relaxed to arbitrary global isometries (yet still preserve the observed behavior). In other words, the (equivalence) orbit of each realization is larger as we allow  global isometries
 in addition to local isometries. Note that self-testing with global isometries implies the usual self-testing, but the converse does not hold in general. Now it turns out that Theorem~\ref{thm:extreme-selftest} can be strengthened by adding the equivalence:
\begin{itemize}
\item[$(iv)$]  $C$ is a self-testing with global isometries. 
\end{itemize}
Thus, extremality gives a stronger property than (usual) self-testing in this context.  We leave  the study of these concepts and their relationships as future work.

\acknowledgements
We would like to thank J. Kaniewski, V. Scarani, and K.T. Goh for helpful discussions. LPT is supported by by the Singapore Ministry of Education Academic Research Fund Tier 3 (Grant No. MOE2012-T3-1-009); by the National Research Fund and the Ministry of Education, Singapore, under the Research Centres of Excellence programme. YC is supported by the John Templeton Foundation Grant 60607 ``Manybox locality as a physical principle''. AV is supported by the NUS Young Investigator award   R-266-000-111-133 and by an NRF Fellowship (NRF-NRFF2018-01 \& R-263-000-D02-281).

\appendix

\section{Semidefinite programming} 
In this section we briefly collect all the tools from SDP duality theory that we use in this paper. For proofs of these facts and additional details, the interested reader is referred to \cite{etienne}.

\begin{thm}\label{sdpthm}
Consider a pair of primal-dual SDPs

\begin{align}
p^*&=\underset{X}{{\rm sup}} \left\{ \la C,X\ra  :    X \succeq  0,\   \la A_i,X \ra=b_i \ (i\in [\ell]) \right\},\tag{P}\label{primalapp}\\
d^* & =\underset{y,Z}{{\rm inf}}\left\{ \sum_{i=1}^\ell b_iy_i\  :\  \sum_{i=1}^\ell y_iA_i-C=Z \succeq 0\right\}\tag{D}\label{dualapp}.
\end{align}
The following properties hold:
\bi 
\item[$(i)$] (Weak duality) Let $X, (y,Z)$ be a pair of primal-dual feasible solutions for $(P)$ and $(D)$ respectively. Then, $\la C,X\ra\le  \sum_{i=1}^\ell b_iy_i$, i.e., $p^*\le d^*.$
\item[$(ii)$] (Optimality condition) Let  $X, (y,Z)$ be a pair of primal-dual feasible solutions for $(P)$ and $(D)$ respectively. If $\la C,X\ra= \sum_{i=1}^\ell b_iy_i$, then we have that  $p^*=d^*$ and furthermore, $X$ and $(y,Z)$ are primal-dual optimal solutions respectively. 

\item[$(iii)$] (Complementary slackness) Let  $X, (y,Z)$ be a pair of primal-dual feasible solutions for $(P)$ and $(D)$ respectively. Under the assumption that $p^*=d^*$ we have that $X, (y,Z)$ are primal-dual optimal if and only if $\la X, Z\ra=0.$
\item[$(iv)$]  (Strong duality) Assume that $ d^*>-\infty$ (resp. $p^* <+\infty$) and that (D) (resp. (P)) is strictly feasible. Then $p^*=d^*$ and furthermore, the primal (resp. dual) optimal value is attained. 
\ei
\end{thm}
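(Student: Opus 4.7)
The plan is to dispatch (i)--(iii) through the single identity
\[
\sum_{i=1}^{\ell} b_i y_i - \langle C, X\rangle \;=\; \langle Z, X\rangle,
\]
and to reserve the real work for (iv), which requires a convex-separation argument.

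First I would verify the identity above under the hypothesis that $X$ is primal feasible and $(y,Z)$ is dual feasible. Using $\langle A_i, X\rangle = b_i$ and $Z = \sum_i y_i A_i - C$, we expand
\[
\langle Z, X\rangle = \Big\langle \sum_i y_i A_i - C, X\Big\rangle = \sum_i y_i b_i - \langle C, X\rangle.
\]
Since $X, Z \succeq 0$, the trace inner product $\langle Z, X\rangle = \operatorname{tr}(ZX)$ is nonnegative (write $X = X^{1/2} X^{1/2}$ and note $\operatorname{tr}(X^{1/2} Z X^{1/2}) \geq 0$). This immediately gives weak duality (i). For (ii), the hypothesis $\langle C, X\rangle = \sum_i b_i y_i$ combined with (i) forces $p^* \geq \langle C, X\rangle = \sum_i b_i y_i \geq d^* \geq p^*$, so all inequalities are equalities and both feasible solutions attain the common optimum. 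For (iii), under the hypothesis $p^* = d^*$, the identity rewrites the duality gap of $(X, (y,Z))$ as $\langle X, Z\rangle$, and optimality of both corresponds exactly to this gap being zero.

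For (iv), the standard route is convex separation. I will focus on the case $p^* < +\infty$ and (P) strictly feasible; the other case is symmetric. Consider the set
\[
\mathcal{K} = \Big\{\big(\langle A_1, X\rangle - t b_1, \ldots, \langle A_\ell, X\rangle - t b_\ell,\; t p^* - \langle C, X\rangle\big) : X \succeq 0,\; t \geq 0\Big\} \subseteq \mathbb{R}^{\ell+1},
\]
which is a convex cone. The key geometric step is to show that the point $(0,\ldots,0,-\varepsilon)$ lies outside the closure of $\mathcal{K}$ for every $\varepsilon > 0$; otherwise a limiting argument would produce $X \succeq 0$ with $\langle A_i, X\rangle = 0$ and $\langle C, X\rangle > 0$, contradicting the existence of a strictly feasible primal via a standard perturbation. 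I then apply a hyperplane separation theorem to obtain $(y, \lambda) \in \mathbb{R}^\ell \times \mathbb{R}_{\geq 0}$ with $\lambda > 0$ certifying that $Z := \sum_i y_i A_i - C \succeq 0$ and $\sum_i b_i y_i \leq p^*$; combined with weak duality this forces $d^* = p^*$ and produces an attaining dual solution.

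The main obstacle is (iv): establishing that the relevant cone is closed (or that the separation produces $\lambda \neq 0$, avoiding an improper separation) is exactly where strict feasibility enters, and this is the classical subtlety distinguishing SDP duality from LP duality. I would handle it by invoking a Slater-type argument: strict primal feasibility $X_0 \succ 0$ implies that any supporting hyperplane with $\lambda = 0$ would give $\sum_i y_i \langle A_i, X_0\rangle = 0$ with $y \neq 0$, but small PSD perturbations of $X_0$ then contradict the separation, forcing $\lambda > 0$ and yielding the desired dual feasible solution attaining $p^*$. Once (iv) is established, the remaining claims follow from (i)--(iii) already proved.
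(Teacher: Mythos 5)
The paper does not actually prove this theorem: it is stated as background, with the reader referred to the literature for proofs, so there is no internal argument to compare yours against. Your reconstruction of parts $(i)$--$(iii)$ is correct and is the standard one: the identity $\sum_i b_i y_i - \langle C,X\rangle = \langle Z,X\rangle$ together with $\langle Z,X\rangle\ge 0$ for $X,Z\succeq 0$ immediately yields weak duality, the optimality criterion, and complementary slackness, exactly as you write.

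For $(iv)$ your strategy (homogenized cone, separating hyperplane, Slater condition to force $\lambda>0$) is the classical route, but one step as stated is not justified: the claim that $(0,\ldots,0,-\varepsilon)$ lies outside the \emph{closure} of $\mathcal{K}$. Your limiting argument tacitly assumes that an approximating sequence $(X_k,t_k)$ has a convergent subsequence, which fails if $\|X_k\|$ is unbounded; handling that requires a normalization/recession-direction argument you have not supplied. The repair is routine and in fact makes the closure claim unnecessary: it suffices to check that the point is not in $\mathcal{K}$ itself (if $t>0$ it would give a primal feasible point beating $p^*$; if $t=0$ it would give a PSD direction $X$ with $\langle A_i,X\rangle=0$, $\langle C,X\rangle>0$, contradicting $p^*<+\infty$ given primal feasibility), and then to invoke \emph{proper} separation of a point from a convex cone. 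Proceeding as you indicate, the separating functional $(y,\lambda)$ satisfies $\sum_i y_iA_i-\lambda C\succeq 0$ and $\sum_i b_iy_i\le \lambda p^*$; if $\lambda=0$, strict feasibility $X_0\succ 0$ with $\langle A_i,X_0\rangle=b_i$ forces $\sum_i y_iA_i=0$ and $\sum_i b_iy_i=0$, so the hyperplane contains both $\mathcal{K}$ and the separated point, contradicting properness. Hence $\lambda>0$, and rescaling gives a dual feasible pair attaining $p^*$, with weak duality closing the gap; the symmetric case is handled as you say. With that one step tightened, your proof is a complete and standard derivation of the facts the paper imports from the SDP literature.
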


Given a  pair of primal-dual SDPs
\eqref{primalapp} and \eqref{dualapp}, 
a  primal feasible solution  $X$  is  called  {\em primal nondegenerate} if 
\begin{equation}\label{eq:nondeg}
 \mathcal{T}_X+{\rm span}\{ A_1,\ldots,A_\ell\}^{\perp}=\mathcal{S}^n,
 \end{equation}
and a dual feasible solution   $(y,Z)$ is   {\em dual nondegenerate}~if 
\begin{equation}\label{eq:dnondeg}
\mathcal{T}_Z+{\rm span} \{ A_1,\ldots,A_\ell\} =\mathcal{S}^n,
\end{equation}
where $\mathcal{T}_Z$ is the tangent space on the manifold of symmetric $n\times n$ matrices with rank equal to $\rank(Z)$,  at the point~$Z$, and  the sum of two vectors spaces denotes  the linear span of their~union.  

A concrete expression for the tangent space is   
$$\mathcal{T}_Z^\perp=\{M\in \mathcal{S}^n: MZ=~0\},$$ e.g. see \cite{complementarity} or  \cite[Lemma 7.1.1]{lov}.  

The next result summarizes sufficient conditions for the unicity of optimal solutions to SDPs identified in  \cite{complementarity}, which we use extensively throughout this work.

\begin{thm}\label{thmdegeneracy}
Consider a pair of primal-dual  SDPs
\eqref{primalapp} and \eqref{dualapp}, where we 
assume that their  optimal values are equal and that both are attained. We have that:
\bi
\item[$(i)$]  If~\eqref{primalapp} has a   nondegenerate optimal solution, \eqref{dualapp} has a   unique optimal solution.
Symmetrically, if~\eqref{dualapp} has  a   nondegenerate optimal solution, then~\eqref{primalapp} has a unique optimal  solution.

\item[$(ii)$] Furthermore, let $X, (y,Z)$ be a pair of primal-dual  optimal solutions that satisfy  
$$\rank(X)+\rank(Z)=n,$$
a property known as strict complementarity. Then, if  $X$ is the unique optimal solution for \eqref{primalapp},  $(y,Z)$ is dual nondegenerate. Symmetrically, if $(y,Z)$ is the unique optimal  for \eqref{dualapp},  $X$ is primal nondegenerate. 
\ei 
\end{thm}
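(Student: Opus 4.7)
The plan is to prove the cycle of equivalences $(i) \Leftrightarrow (ii) \Leftrightarrow (iii)$ by stitching together three results, two of which have been established in this paper or cited in the literature: Theorem~\ref{thm:analytic}(ii) giving the angle-parametrized characterization of extreme points of $\co(2,2)$, Scarani's \cite[Theorem~1]{Scarani} giving an angle-parametrized characterization of when a rank-two correlator self-tests the singlet, and Tsirelson's \cite[Theorem~3.2]{TS87} which handles the subtlety that \emph{a priori} a self-test could be a self-test of some non-singlet state.

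For $(i) \Leftrightarrow (ii)$, the strategy is to translate both conditions into the same system of saturated linear inequalities in the angle variables and observe they coincide. By Theorem~\ref{thm:analytic}(ii), a rank-two correlator $C \in \co(2,2)$ is extreme if and only if, writing $\theta_{xy} = \arccos(c_{xy})$, exactly one of the cycle inequalities $0 \leq \sum_{xy\ne x'y'}\theta_{xy}-\theta_{x'y'}\le 2\pi$ is saturated, and at most one of the box inequalities $0\le \theta_{xy}\le \pi$ is saturated. Meanwhile, Scarani's result states that $C$ self-tests the singlet if and only if an analogous system of inequalities holds in the $\arcsin$ variables $\phi_{xy}=\arcsin(c_{xy})$ (shifted to lie in $[-\pi/2,\pi/2]$). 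The key identity is $\arccos(x)+\arcsin(x) = \pi/2$, which gives $\theta_{xy} = \pi/2 - \phi_{xy}$. Substituting this into Scarani's inequalities, the cycle inequality $-\pi \le \sum_{xy\ne x'y'}\phi_{xy}-\phi_{x'y'} \le \pi$ turns into the $\theta$-cycle inequality (using that the sum contains three positive and one negative $\phi$ terms, so the constant shifts combine to give exactly the $[0, 2\pi]$ range), and the box inequality $-\pi/2\le \phi_{xy}\le \pi/2$ becomes $0 \le \theta_{xy}\le \pi$. Saturation is preserved by this affine substitution, so the two characterizations agree and the equivalence $(i)\Leftrightarrow (ii)$ follows.

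For $(ii)\Rightarrow (iii)$, this is immediate from the definition: if $C$ self-tests some specific ensemble (the singlet together with its associated observables), then by definition $C$ is a self-test. The nontrivial direction is $(iii)\Rightarrow (ii)$, which is where Tsirelson's \cite[Theorem~3.2]{TS87} does the work. That result essentially says that for $\co(2,2)$ with a rank-two extremal correlator, every quantum realization factors through (up to local isometry and junk) the Clifford-algebra representation, which on two dichotomic observables per side reduces to the qubit Pauli setting; the only admissible state up to local isometry compatible with such a realization of a rank-two extremal correlator is the singlet. Thus any self-test in this setting is automatically a self-test of the singlet.

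The main obstacle I anticipate is verifying the exact equivalence of the two systems of saturated inequalities in step one: the $\arccos$ and $\arcsin$ formulations look superficially different and the bookkeeping of which cycle-sum and which box inequality corresponds to which must be done carefully to ensure that ``exactly one'' and ``at most one'' carry over correctly under the affine change of variables. Beyond that bookkeeping, the other two directions are citation-driven and should be short.
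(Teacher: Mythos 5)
Your proposal does not address the statement you were asked to prove. Theorem~\ref{thmdegeneracy} is a general fact about semidefinite programming duality: for a primal-dual pair \eqref{primalapp}--\eqref{dualapp} with zero duality gap and attainment, $(i)$ a primal (resp.\ dual) nondegenerate optimal solution forces the dual (resp.\ primal) optimal solution to be unique, and $(ii)$ under strict complementarity $\rank(X)+\rank(Z)=n$, uniqueness of the optimal solution on one side forces nondegeneracy on the other. A proof of it must work with the ingredients in its hypotheses: the tangent spaces $\mathcal{T}_X$, $\mathcal{T}_Z$ (equivalently $\mathcal{T}_Z^\perp=\{M\in\mathcal{S}^n:MZ=0\}$), the span of the constraint matrices $A_1,\ldots,A_\ell$, and complementary slackness $\la X,Z\ra=0$, i.e.\ $XZ=0$. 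For instance, part $(i)$ follows by noting that any two dual optimal $Z,Z'$ satisfy $Z-Z'\in\Span\{A_1,\ldots,A_\ell\}$ and $X(Z-Z')=0$, so $Z-Z'\in\mathcal{T}_X^\perp\cap\Span\{A_1,\ldots,A_\ell\}$, which is $\{0\}$ exactly when \eqref{eq:nondeg} holds; part $(ii)$ is the deeper converse established in \cite{complementarity}. (The paper itself does not reprove this theorem; it imports it from \cite{complementarity}.)

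What you have written instead is a proof sketch for Theorem~\ref{thm:extreme-selftest}, the equivalence of extremality, self-testing the singlet, and being a self-test for rank-two correlators in $\co(2,2)$, assembled from Theorem~\ref{thm:analytic}$(ii)$, Scarani's angle criterion, and Tsirelson's \cite[Theorem~3.2]{TS87}. That argument, whatever its merits, contains no step relevant to nondegeneracy, strict complementarity, or uniqueness of SDP optimal solutions, so it cannot be graded as a proof of Theorem~\ref{thmdegeneracy}; the gap is not a missing lemma but a wholesale mismatch between the claim and the argument. You would need to start over with the convex-geometric/linear-algebraic argument sketched above (or cite and reproduce the relevant proofs from \cite{complementarity}).
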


\section{Necessary conditions for extremality in the set of quantum correlators}
In this section we collect several useful properties of extreme points of $\co(n,m)$, identified in the seminal work of Tsirelson  \cite{TS87, TS93}. For a more modern proof of these facts the reader is referred to \cite{prakash}.

A family   of vectors $u_1,...,u_n,v_1,...,v_m$ is  called a {\em $C$-system}  of $C\in \co(n,m)$ if they satisfy  $\norm{u_x}\leq~1$, $\norm{v_y}\leq~1$, and $c_{xy}=\la u_x,v_y\ra,~\forall x\in[n], y\in[m]$. 

\begin{thm}\label{thm:extconditions}
 For any  $C\in \ext(\co(n,m))$ we have:
\begin{itemize}
\item[$(i)$] All $C$-systems   are necessarily unit vectors; 
\item[$(ii)$] For any  $C$-system $\{u_1,\ldots u_n,v_1,\ldots, v_m\}$ we have  that   ${\rm span}(\{u_i\}_{i=1}^n)={\rm span}(\{v_j\}_{j=1}^m)$;
\item[$(iii)$]  $C$ admits a unique PSD completion, i.e., there exists a {\em unique} matrix  $\hc\in\calE_{n+m}$ with  
$\hc=\left(\begin{smallmatrix}A& C\\C^\top& B\end{smallmatrix}\right)\in \calE_{n+m}$.
Furthermore,  we have  that  $\hc
\in\ext(\calE_{n+m})$ and     ${\rm rank}(\hc)={\rm rank}(A)={\rm rank}(B)={\rm rank}(C)$.
\end{itemize}
\end{thm}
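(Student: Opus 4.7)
\emph{Part $(i)$.} I would prove the three parts in order, since each builds on the previous. For $(i)$, argue by contradiction: suppose some $u_{x_0}$ in a $C$-system satisfies $\norm{u_{x_0}}<1$, and exhibit $C$ as the midpoint of two distinct quantum correlators. First, replace $u_{x_0}$ by its orthogonal projection onto $\Span(v_1,\ldots,v_m)$; this preserves every inner product $\la u_{x_0},v_y\ra$ and does not increase the norm, so one may assume $u_{x_0}\in\Span(v_y)$. Next, pick any nonzero $w\in\Span(v_y)$ (such $w$ exists unless every $v_y=0$, the degenerate case $C=0$, which is handled separately). For small $\epsilon>0$ both $u_{x_0}\pm\epsilon w$ have norm at most one, so replacing $u_{x_0}$ by either yields a valid $C$-system whose induced correlator differs from $C$ only in row $x_0$, by $\pm\epsilon(\la w,v_y\ra)_y$. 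Since $w\ne 0$ and $w\in\Span(v_y)$, at least one $\la w,v_y\ra$ is nonzero, yielding two distinct correlators in $\co(n,m)$ whose average is $C$---contradicting extremality.

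\emph{Part $(ii)$.} Given $(i)$, suppose some $u_{x_0}$ has a nonzero component orthogonal to $\Span(v_y)$. Its projection $u_{x_0}^{\parallel}$ onto that span still satisfies $\la u_{x_0}^{\parallel},v_y\ra=c_{x_0 y}$ but now $\norm{u_{x_0}^{\parallel}}<1$, so replacing $u_{x_0}$ by $u_{x_0}^{\parallel}$ gives a $C$-system violating $(i)$. The symmetric argument yields the reverse inclusion, hence $\Span(u_x)=\Span(v_y)$.

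\emph{Part $(iii)$.} Existence of a PSD completion is immediate: any $C$-system (with unit vectors by $(i)$) has Gram matrix lying in $\calE_{n+m}$ and projecting to $C$. The main obstacle is uniqueness, which I would handle via a tangent-space computation. Let $\hc$ be any PSD completion of rank $r$ and write $\hc=P^\top P$ with $P=[U\,|\,V]\in\R^{r\times(n+m)}$, where $U\in\R^{r\times n}$ and $V\in\R^{r\times m}$ have the Gram vectors $u_x,v_y$ as columns; by $(ii)$ both $U$ and $V$ have rank $r$. Any symmetric tangent direction $M$ to $\Pi^{-1}(C)\cap\calE_{n+m}$ at $\hc$ has zero diagonal, zero $C$-block (since $\Pi$ is constant), and range contained in that of $\hc$ (so that $\hc\pm\epsilon M\succeq 0$ for small $\epsilon$), hence is of the form $M=P^\top S P$ for some symmetric $S\in\calS^r$, and the $C$-block constraint becomes $U^\top S V=0$. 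Injectivity of $U^\top:\R^r\to\R^n$ forces $SV=0$, and surjectivity of $V:\R^m\to\R^r$ then forces $S=0$; hence $M=0$, so $\hc$ is an isolated, and thus the unique, point of the convex set $\Pi^{-1}(C)\cap\calE_{n+m}$.

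Finally, $\hc\in\ext(\calE_{n+m})$ follows easily: any decomposition $\hc=(\hc_1+\hc_2)/2$ with $\hc_i\in\calE_{n+m}$ projects to $C=(\Pi(\hc_1)+\Pi(\hc_2))/2$, extremality of $C$ forces $\Pi(\hc_i)=C$, and uniqueness forces $\hc_1=\hc_2=\hc$. The rank identities come from the Gram representation: $\rank(A)=\rank(U^\top U)=\rank(U)=r$ and similarly $\rank(B)=r$, while $\rank(C)=\rank(U^\top V)=r$ since $U^\top$ is injective on $\R^r$ and $V$ maps onto $\R^r$. The decisive input throughout is property $(ii)$, which forces $U$ and $V$ to have rank equal to the Gram-representation dimension $r$ and thereby collapses $U^\top S V=0$ to $S=0$ in the uniqueness step.
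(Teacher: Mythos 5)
The paper does not prove this theorem in the text: it attributes $(i)$--$(iii)$ to Tsirelson and refers to an external reference for a modern exposition, noting only that the proof there establishes the chain $(i)\Rightarrow(ii)\Rightarrow(iii)$. Your argument is a valid self-contained proof following that same chain, so there is no in-paper proof to compare against; still, two small points should be tightened.

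First, in part $(i)$ you implicitly invoke the fact that a $C$-system in which some vectors have norm strictly less than one still realizes an element of $\co(n,m)$, even though the Tsirelson characterization of $\co(n,m)$ used in the paper requires unit vectors. This is true (append to each $u_x$ and $v_y$ a private extra coordinate carrying $\sqrt{1-\norm{u_x}^2}$ or $\sqrt{1-\norm{v_y}^2}$, which restores unit norm without changing any cross inner product), but it should be stated, since without it the perturbed tuples $u_{x_0}\pm\epsilon w$ do not visibly produce correlators in $\co(n,m)$. You should also say a word about why the degenerate case $C=0$ need not be handled: $0$ is the midpoint of $\pm e_1e_1^\top\in\co(n,m)$ and hence never extreme.

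Second, in part $(iii)$ the concluding sentence ``so $\hc$ is an isolated, and thus the unique, point'' does not quite say what the tangent computation gives. What you prove is that \emph{every} PSD completion $\hc$ (since the span condition from $(ii)$ holds for every Gram decomposition) admits no nonzero symmetric $M$ with $\hc\pm\epsilon M$ feasible, i.e.\ every point of the compact convex fiber $\Pi^{-1}(C)\cap\calE_{n+m}$ is an extreme point of the fiber. A convex set all of whose points are extreme is a singleton (equivalently: if there were two completions $\hc_1\ne\hc_2$, apply your tangent argument to their midpoint with $M=(\hc_2-\hc_1)/2$, which is a legitimate two-sided direction there, to get a contradiction). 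As written, the claim is phrased as if a single application at one $\hc$ sufficed, which only shows that $\hc$ is extreme in the fiber, not that the fiber has a single point. The fix is one line, but it should be spelled out. With these two clarifications the proof is correct.
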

 We note that the proof of Theorem \ref{thm:extconditions}  establishes  the following chain of implications $(i) \Longrightarrow (ii)\Longrightarrow (iii) $. To the best of our knowledge, it is not known whether any of these three conditions is equivalent to extremality.

 \section{Properties of PSD  completions} 
 
 In this section we collect certain properties of  PSD completions,  which we use in the proof of Theorem \ref{thm:analytic}. Recall that any vector $x=(x_e)\in \calE(G)$ corresponds to a partial $G$-matrix that admits a PSD completion to a full PSD matrix with diagonal entries equal to one. 

 As any principal submatrix of a PSD matrix is also PSD, a necessary condition for $a\in \calE(G)$ is that the restriction of  $a$ to any completely specified principal submatrix is PSD. In other words, if $K$ is a clique in $G$, i.e., a fully connected subgraph of $G$, the restriction of $x$ to $K$, denoted by $x_K$, should lie in $\calE(K)$.   
 
 This necessary condition turns out to be sufficient if and only if the graph $G$ is chordal, i.e.,  every circuit of length at least four in $G$ has a chord. 
  \begin{thm}\cite{wolk}\label{wolk}
Graph $G$ is chordal iff $$\calE(G)=\{x\in \R^E: x_{K}\in\calE(K) \text{ for each clique } K\subseteq~G\}.$$
\end{thm}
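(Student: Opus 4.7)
The plan is to prove the two implications separately, using a perfect elimination ordering for the chordal direction and an explicit cycle-based counterexample for the converse.

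For the forward direction, the inclusion $\calE(G) \subseteq \{x \in \R^E : x_K \in \calE(K) \text{ for each clique } K\}$ is immediate: any principal submatrix of a PSD matrix with unit diagonal is itself PSD with unit diagonal. For the nontrivial inclusion, I would exploit the existence of a perfect elimination ordering $v_1, \ldots, v_n$ of the chordal graph $G$ and induct on $n$. At the inductive step, let $v = v_1$ be simplicial, so $N(v)$ is a clique $K$ of $G$. Given a partial matrix $x \in \R^E$ satisfying the clique condition, its restriction to $G - v$ satisfies the clique condition there (cliques of $G-v$ are cliques of $G$), so by induction there is a PSD completion $M'$ of $x$ restricted to $G-v$. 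Write $M' = \bigl(\begin{smallmatrix} A & B \\ B^\top & D \end{smallmatrix}\bigr)$ with $A$ indexed by $N(v)$, and let $u \in \R^{|N(v)|}$ collect the specified entries $x_{iv}$, $i \in N(v)$. The one-edge-at-a-time completion lemma (Schur complement argument) asserts that the unspecified entries in row/column $v$ can be chosen so that the resulting $(n+1)\times (n+1)$ matrix is PSD, provided both $M'$ and the block $\bigl(\begin{smallmatrix} A & u \\ u^\top & 1 \end{smallmatrix}\bigr)$ are PSD. The former holds by induction; the latter is exactly the clique condition for $K \cup \{v\}$. This yields a PSD completion of the whole partial matrix with unit diagonal, hence a point of $\calE(G)$.

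For the converse, suppose $G$ is not chordal. Then $G$ contains an induced cycle $v_1, v_2, \ldots, v_k, v_1$ with $k \geq 4$. Define $x \in \R^E$ by
\[
x_{v_i v_{i+1}} = 1 \text{ for } i = 1, \ldots, k-1, \quad x_{v_k v_1} = -1, \quad x_e = 0 \text{ for all other } e \in E.
\]
I would then verify, by case analysis, that $x_K \in \calE(K)$ for every clique $K$ of $G$. Crucially, since the cycle is induced, any clique $K$ of $G$ contains at most two cycle-vertices, and if it contains two they must be consecutive on the cycle. In each case the restricted matrix is either the identity or the direct sum of the rank-one block $\bigl(\begin{smallmatrix} 1 & \pm 1 \\ \pm 1 & 1 \end{smallmatrix}\bigr)$ with an identity block, hence PSD with unit diagonal. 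Finally, any PSD completion of $x$ with unit diagonal would correspond via Gram decomposition to unit vectors $u_1, \ldots, u_n$ with $\la u_{v_i}, u_{v_{i+1}}\ra = 1$ for $i < k$, forcing $u_{v_1} = \cdots = u_{v_k}$, which contradicts $\la u_{v_k}, u_{v_1}\ra = -1$. Thus $x$ satisfies the clique condition but $x \notin \calE(G)$, giving a strict inclusion.

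The main obstacle is the inductive extension step in the chordal direction: one must verify that the unspecified entries of the added row can genuinely be chosen to preserve positive semidefiniteness even when $A$ is singular. I would handle this by using the explicit formula based on the Moore-Penrose pseudoinverse, $? = B^\top A^\dagger u$, and invoking a generalized Schur complement identity to show that the resulting matrix is PSD precisely because both $M'$ and $\bigl(\begin{smallmatrix} A & u \\ u^\top & 1 \end{smallmatrix}\bigr)$ are. The counterexample in the converse direction is the cleaner half: the fact that a length-$k$ induced cycle has no chords is exactly what ensures that no clique of $G$ detects the global obstruction, so the cycle inequalities from Section~\ref{sec:analytric} ultimately provide the obstruction that the clique conditions cannot see.
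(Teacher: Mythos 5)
The paper does not prove this statement---it is cited to \cite{wolk} (the Grone--Johnson--S\'a--Wolkowicz theorem) and used as a black box in the proof of Lemma~\ref{cor:uniqueness}---so there is no paper proof to compare against. That said, your blind proof is correct and is essentially the classical argument for this theorem. The forward direction via a perfect elimination ordering is sound: if $v$ is simplicial, $N(v)$ is a clique, so the block $\left(\begin{smallmatrix} A & u \\ u^\top & 1 \end{smallmatrix}\right)$ is fully specified by $x$ and PSD by the clique hypothesis; setting the unspecified entries to $B^\top A^\dagger u$ makes the Schur complement of $A$ in the full $n\times n$ matrix block-diagonal with blocks $D - B^\top A^\dagger B$ (PSD since $M'\succeq 0$) and $1 - u^\top A^\dagger u$ (nonnegative by the clique condition), and the range condition for the generalized Schur complement holds since $u \in \mathrm{range}(A)$ and $\mathrm{range}(B) \subseteq \mathrm{range}(A)$. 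The converse with the $(1,\dots,1,-1)$ pattern on a chordless cycle is also correct: since the cycle is induced, every clique meets it in at most two consecutive vertices, so every clique restriction is block-diagonal with a $2\times 2$ rank-one block plus an identity block and hence PSD; yet any Gram realization would force $u_{v_1}=\dots=u_{v_k}$ and then $\langle u_{v_k}, u_{v_1}\rangle = 1 \neq -1$.

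Two small clean-up remarks: you should state the base case of the induction (a complete graph, where the whole vertex set is a clique and the claim is tautological) and note explicitly that induced subgraphs of chordal graphs are chordal, so the inductive hypothesis applies to $G-v$; also, the dimension in ``resulting $(n+1)\times(n+1)$ matrix'' should read $n\times n$, and ``one-edge-at-a-time'' should be ``one-vertex-at-a-time.'' None of these affect the substance; the proof is a faithful reconstruction of the standard argument.
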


We also need the following result, which gives an explicit description of $\calE(K_3)$. 

\begin{thm}\label{33case}\cite{BJT} 
Let $0\le \ta_1, \ta_2, \ta_3\le \pi$. Then, the matrix 
$$
C=\begin{pmatrix}1 & \cos \ta_1& \cos \ta_3\\
\cos \ta_1& 1 & \cos \ta_2\\
\cos \ta_3& \cos \ta_2 & 1 
\end{pmatrix}$$
is positive semidefinite if and only if 
\begin{align}
\ta_1\le \ta_2+\ta_3, & \quad \ta_2\le \ta_1+\ta_3, \\ 
\quad \ta_3\le \ta_1+\ta_2, & \quad \ta_1+ \ta_2+\ta_3\le 2\pi.
\end{align}
Furthermore, $C$ is singular if and only  if one of the above inequalities holds with equality. 
\end{thm}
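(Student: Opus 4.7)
The plan is to reduce positive semidefiniteness of $C$ to the sign of $\det(C)$, then to express the determinant as a product of four sines and analyze each factor separately.

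First, since the diagonal entries of $C$ equal one and the off-diagonal entries are cosines of angles in $[0,\pi]$ (hence have absolute value at most one), every $1\times 1$ and $2\times 2$ principal minor of $C$ is automatically nonnegative. Consequently, $C$ is positive semidefinite if and only if $\det(C)\ge 0$, and $C$ is singular if and only if $\det(C)=0$.

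Next, I would verify the trigonometric identity
\[
\det(C)\;=\;4\sin(s)\sin(s-\ta_1)\sin(s-\ta_2)\sin(s-\ta_3),\qquad s=\tfrac{\ta_1+\ta_2+\ta_3}{2}.
\]
Direct expansion gives $\det(C)=1-\cos^2\ta_1-\cos^2\ta_2-\cos^2\ta_3+2\cos\ta_1\cos\ta_2\cos\ta_3$, and two applications of product-to-sum formulas convert this into the above product. It then remains to match the sign of each sine factor with the four linear inequalities of the theorem. Under $\ta_i\in[0,\pi]$ we have $s\in[0,3\pi/2]$, so $\sin(s)\ge 0$ iff $s\le\pi$, equivalently $\ta_1+\ta_2+\ta_3\le 2\pi$. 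Likewise $s-\ta_i\in[-\pi,\pi]$, so $\sin(s-\ta_i)\ge 0$ iff $s\ge\ta_i$, which is $\ta_i\le \ta_j+\ta_k$.

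The one subtle point, and the main obstacle, is to rule out cases in which an even number of factors are simultaneously negative, since a priori this could yield $\det(C)\ge 0$ while some of the stated inequalities fail. To handle this I would argue that at most one of the four inequalities can be violated: if $\ta_1>\ta_2+\ta_3$, then $2\ta_1>\ta_1+\ta_2+\ta_3$, and combined with $\ta_1\le\pi$ this forces $\ta_1+\ta_2+\ta_3<2\pi$; moreover $\ta_2<\ta_1\le \ta_1+\ta_3$ and symmetrically $\ta_3<\ta_1+\ta_2$. A parallel argument rules out multiple violations in the case $\ta_1+\ta_2+\ta_3>2\pi$. Hence $\det(C)\ge 0$ iff every one of the four sine factors is nonnegative iff all four inequalities hold, and $\det(C)=0$ iff some factor vanishes iff some inequality is tight, yielding both the PSD characterization and the singularity statement.
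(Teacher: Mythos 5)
This theorem is stated in the paper with a citation to reference \cite{BJT} and is not proved there, so there is no paper proof to compare against. Your argument is a correct, self-contained, elementary proof.

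The two load-bearing steps are both sound. First, reducing PSD to $\det(C)\ge 0$ is legitimate because a real symmetric matrix is positive semidefinite iff \emph{all} its principal minors (not merely the leading ones) are nonnegative, and you correctly observe that the $1\times 1$ and $2\times 2$ principal minors of $C$ are automatically nonnegative under the hypothesis $\ta_i\in[0,\pi]$. Second, the factorization $\det(C)=4\sin(s)\sin(s-\ta_1)\sin(s-\ta_2)\sin(s-\ta_3)$ with $s=(\ta_1+\ta_2+\ta_3)/2$ is the classical identity for the Gram determinant of three unit vectors expressed through pairwise angles, and your sign analysis of the factors matches the four inequalities exactly. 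The point you flag as ``the main obstacle''---that an even number of factors might be simultaneously negative---is precisely the right thing to worry about, and your argument that at most one of the four inequalities can be violated (if $\ta_1>\ta_2+\ta_3$ then the remaining three hold strictly, and if $\ta_1+\ta_2+\ta_3>2\pi$ then every triangle inequality holds strictly since otherwise some $\ta_i>\pi$) closes it. One small point of precision: the tight range is $s-\ta_i\in[-\pi/2,\pi]$ rather than $[-\pi,\pi]$, but since $\sin\ge 0$ on $[-\pi,\pi]$ exactly on $[0,\pi]$, your stated interval still yields the correct equivalence $\sin(s-\ta_i)\ge 0\Leftrightarrow s\ge\ta_i$, so nothing breaks. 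The singularity claim then follows by observing that $\det(C)=0$ iff one sine factor vanishes, and within the given parameter ranges each vanishing factor corresponds to a tight inequality (the boundary sub-cases $s=0$ and $s-\ta_i=\pi$ also force a tight inequality, as one can check).
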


 Based on the previous two results, we now prove the following:
 
\begin{lmm}\label{cor:uniqueness}Let $C\in [-1,1]^{2\times 2}$ and set $\theta_{xy}=\cos^{-1}(c_{xy})\in [0,\pi]$, for  $x\in \{1,2\},y\in \{3,4\}$. Then, $C$ has a PSD completion  if and only if: 
\begin{equation}\label{interval}
\begin{aligned}
&\max\{|\ta_{31}-\ta_{41}|,
|\ta_{32}-\ta_{42}|\} \le \\
&\le  \min\{\ta_{31}+\ta_{41},
\ta_{32}+\ta_{42},
2\pi -(\ta_{31}+\ta_{41}),
2\pi- (\ta_{32}+\ta_{42})\}.
\end{aligned}
\end{equation}
Furthermore, assuming  that \eqref{interval} holds,  for  any PSD completion of $C$, we have that  $\ta_{34}=\arccos(c_{34})$ lies in the interval \eqref{interval}. Conversely, for any $\ta$ in the interval \eqref{interval},  there exists a PSD completion of $C$ with $\ta_{34}=\ta$.
\end{lmm}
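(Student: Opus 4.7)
My plan is to reduce the $4\times 4$ PSD completion problem on $K_{2,2}$ to a completion problem on a chordal graph. The underlying graph of $C$ is $K_{2,2}$, which is the $4$-cycle and hence not chordal, so Theorem~\ref{wolk} cannot be applied directly. However, adjoining the single edge $\{3,4\}$ yields a graph $G'$ which \emph{is} chordal, consisting of the two triangles $\{1,3,4\}$ and $\{2,3,4\}$ joined along $\{3,4\}$. By Theorem~\ref{wolk}, a $G'$-partial matrix with unit diagonal lies in $\calE(G')$ iff its restriction to each clique lies in $\calE(K_3)$, and moreover any such matrix extends to a full $4\times 4$ PSD matrix (i.e.\ the remaining entry $X_{12}$ corresponding to $\theta_{12}$ can always be chosen afterwards). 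This reduces the question to: for which values of $\theta_{34}$ do both triangles $\{1,3,4\}$ and $\{2,3,4\}$ yield PSD $3\times 3$ matrices?

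To turn this into explicit inequalities I would invoke Theorem~\ref{33case} on each triangle. For $\{1,3,4\}$, positive semidefiniteness is equivalent to $|\theta_{13}-\theta_{14}| \le \theta_{34} \le \min\{\theta_{13}+\theta_{14},\, 2\pi - (\theta_{13}+\theta_{14})\}$, and symmetrically for $\{2,3,4\}$ after replacing the indices by $\{2,3\}$ and $\{2,4\}$. Intersecting these two intervals gives exactly
\[
\theta_{34}\in\bigl[\max\{|\theta_{13}-\theta_{14}|,|\theta_{23}-\theta_{24}|\},\ \min\{\theta_{13}+\theta_{14},\theta_{23}+\theta_{24},2\pi-\theta_{13}-\theta_{14},2\pi-\theta_{23}-\theta_{24}\}\bigr],
\]
whose non-emptiness is precisely the condition~\eqref{interval} stated in the lemma. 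This immediately gives the first part, since a PSD completion of $C$ exists iff some valid $\theta_{34}$ can be chosen. The second part then follows directly: the forward direction is the observation that any PSD completion has PSD principal $3\times3$ submatrices on $\{1,3,4\}$ and $\{2,3,4\}$, so $\theta_{34}=\arccos(c_{34})$ must lie in the displayed interval by Theorem~\ref{33case}; the converse amounts to fixing any $\theta$ in that interval, building a PSD $G'$-matrix (by Theorem~\ref{33case} applied to both triangles), and invoking the chordal completion property of $G'$ to fill in~$\theta_{12}$.

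The main conceptual step, which I view as the only mildly delicate one, is recognising that the two "missing" entries $X_{12}$ and $X_{34}$ play asymmetric roles once we pass through the chordal extension: only $X_{34}$ enters any cliquewise PSD constraint, while $X_{12}$ is handled automatically by the chordal completion theorem applied to $G'$. After that observation, the argument is a routine combination of Theorems~\ref{wolk} and~\ref{33case}. I would also remark that by the symmetry of $K_{2,2}$ one could equally well adjoin the edge $\{1,2\}$ instead and obtain the analogous admissible interval for $\theta_{12}$; the resulting existence conditions must of course be equivalent to~\eqref{interval}, which provides a useful consistency check.
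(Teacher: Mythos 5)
Your proposal is correct and follows essentially the same route as the paper: adjoin the edge $\{3,4\}$ to obtain a chordal graph, apply Theorem~\ref{wolk} to reduce to the two triangles $\{1,3,4\}$ and $\{2,3,4\}$, use Theorem~\ref{33case} on each, and eliminate $\theta_{34}$ (equivalently, intersect the two admissible intervals) to obtain \eqref{interval}. No gaps to report.
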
 
 
 \begin{proof}
 Let $K_{2,2}$ be the graph with vertex set $\{1,2,3,4\}$ and edges $\{(1,3), (1,4), (2,3), (2,4)\}$. By  definition of the elliptope of a graph,  $C\in \calE(K_{2,2})$  if and only if  there exists $c_{34}$ such that 
$(C,c_{34})\in \calE(K_{2,2}\cup~(3,4)\})$. Nevertheless,  the graph $K_{2,2}\cup~\{3,4\}$ is chordal, and thus, by Theorem \ref{wolk} we have that   $C\in \calE(K_{2,2})$ if and only if there exists $c_{34}\in [-1, 1]$ such that $(c_{13}, c_{14}, c_{34})\in \calE(K_3) $ and  $(c_{23}, c_{24}, c_{34})\in \calE(K_3)$. Lastly, by  
Theorem \ref{33case},
these two conditions are equivalent to
the existence of  $\ta_{34}\in [0, \pi]$ satisfying the following sixteen inequalities:
\be\label{conditions0}
\begin{aligned}
\ta_{13}\le \ta_{14}+\ta_{34}, & \quad  \ta_{23}\le \ta_{24}+\ta_{34}\\
 \ta_{14}\le \ta_{13}+\ta_{34},  & \quad \ta_{24}\le \ta_{23}+\ta_{34} \\
  \ta_{34}\le \ta_{13}+\ta_{14}, & \quad \ta_{34}\le \ta_{23}+\ta_{24}\\
   \ta_{13}+\ta_{14}+\ta_{34}\le 2\pi,&   \quad \ta_{23}+ \ta_{24}+\ta_{34}\le 2\pi.
\end{aligned}
\ee
Eliminating $\ta_{34}$ from the  system \eqref{conditions0} we get  \eqref{interval}.
\end{proof}

\section{Proof of Theorem \ref{thm:main}} 


  Let $K_{2,n}$ be the complete bipartite graph, where the first  bipartition has two vertices labelled  $\{1,2\}$ and the second bipartition has $n$ vertices labelled  $\{3,\ldots,n+2\}$.  As $K_{2,n}$ has no $K_4$ minor, by \cite[Theorem 4.7]{L97} we have that $\co(2,n)=\cos\pi(\met(K_{2,n}))$. Setting   $\theta_{xy}=\arccos(c_{xy})$,  we get that 
$c=(c_{xy})\in \co(2,n)$ if and only if  there exists  $a=(a_{xy})\in \met(K_{2,n}) $ such that   $c_{xy}=\cos(\pi a_{xy}),$~i.e.,
$${\ta_{xy}\over \pi}\in \met(K_{2,n}).$$
The box constraints  for $\met(K_{2,2})$~give
$$0\le \theta_{xy}\le \pi, \ \forall x,y.$$

We continue with the cycle inequalities of $\met(K_{2,n})$. Note that for each $3\le i<j\le n+2$, the graph  $K_{2,n}$ contains    one cycle of length four,  namely $C=(1,i,j,2)$. The cycle inequality for $F=\{1i\}$ gives 
$$\ta_{1i}-\ta_{1j}-\ta_{2i}-\ta_{2j}\le 0,$$
and the cycle inequality for $C\setminus F$ gives
$$\ta_{1j}+\ta_{2i}+\ta_{2j}-\ta_{1i}\le 2\pi.$$
Summarizing, $c=(c_{xy})\in \co(2,n)$ if and only if 
\be
\begin{aligned}
&0\le \theta_{xy}\le \pi, \ \forall x,y,\\
& 0\le \ta_{1j}+\ta_{2i}+\ta_{2j} -\ta_{1i}\le 2\pi ,\\
& 0\le \ta_{1i}+\ta_{2i}+\ta_{2j} -\ta_{1j}\le 2\pi, \\
& 0\le \ta_{1i}+\ta_{1j}+\ta_{2j} -\ta_{2i}\le 2\pi, \\
& 0\le \ta_{1i}+\ta_{1j}+\ta_{2i} -\ta_{2j}\le 2\pi,
\end{aligned} 
\ee 
where   $3\le i<j\le n+2$.

\section{Proof of Theorem \ref{thm:analytic}. }

\noindent {\bf Part $(i)$. } Let $C\in \co(n,m)$ with ${\rm rank}(C)=1$. We show that $C$ is an  extreme point  if and only if   $C=xy^\top$, for some  $x\in \{\pm 1\}^n, y\in \{\pm 1\}^m$.

First, assume that $C=xy^\top$, where  $x~\in \{\pm 1\}^n, y\in~\{\pm 1\}^m$,  and consider a convex combination 
\be\label{wefegtr}
C =\sum_k \lambda_k C^k, \ \text{ where } \sum_k\lambda_k=1,\  \lambda_k\ge0,
\ee
where the matrices  $C^k$ lie in $\co(n,m)$, i.e., $C^k_{ij}=\la u^k_i,v^k_j\ra$, where $\|u^k_i\|=\|v^k_j\|=1$. 
 Note that   
\begin{align}
\label{sdvewe}
1=C_{ij}=|x_iy_j|&=\left|\sum_k \lambda_k C_{ij}^k\right|=\left|\sum_k \lambda_k \la u^k_i,v^k_j\ra\right| \\
&\le \sum_k \lambda_k  \left| \la u^k_i,v^k_j\ra\right|\le \sum_k \lambda_k=1,
\end{align}
and thus we have equality throughout. In particular, we get that 
$\sum_k \lambda_k  \left|
\la u^k_i,v^k_j\ra\right|=1$, 
and as $|\la u^k_i,v^k_j\ra|\le 1$, this implies  that 
$|\la u^k_i,v^k_j\ra|=1,$ for all $   k,i,j.$
In other words, all  matrices $C^k$ have entries $\pm 1$. Lastly, 
by \eqref{wefegtr} we get  that $C^k=xy^\top$ for all $k$, and thus   $C$ is~extremal. 

Conversely, let  $C$ is a rank-one extreme point of $\co(n,m)$. In this setting, we have already mentioned that  $C$ admits a unique PSD completion  $\hc\in\calE_{n+m}$ with  
$\hc=\left(\begin{smallmatrix}A& C\\C^\top& B\end{smallmatrix}\right)\in \calE_{n+m}$, and furthermore,   $\hc
\in\ext(\calE_{n+m})$ and     ${\rm rank}(\hc)={\rm rank}(A)={\rm rank}(B)={\rm rank}(C)$, e.g., see \cite[Lemma 2.5]{prakash}. By the assumptions we have that $\rank(C)=1$,  and thus $\rank(\hc)=1$, i.e., 
$$\hat{C}=\begin{pmatrix}x \\ y\end{pmatrix} \begin{pmatrix}x \\ y\end{pmatrix}^\top\in 
\calE_{n+m}.$$
Since $\hat{C}\in \calE_{n+m}$, it follows that $x_i^2=y_i^2=1$. In turn, this shows that   $C=xy^\top$ where $x\in \{\pm 1\}^n, y\in \{\pm 1\}^m$.

\medskip 

\medskip 
\noindent  {\bf Part $(ii)$. }  Let $C=\left(\begin{smallmatrix} c_{13} & c_{14} \\ c_{23} & c_{24}\end{smallmatrix}\right)\in\co(2,2)$ with $\rank(C)=2$, and set  $\theta_{xy}=\arccos(c_{xy})\in [0,\pi]$ for all $x,y$. We show that  $C\in \ext \co(2,2)$ if and only if it saturates exactly one of the  cycle inequalities 
$$0\le \sum_{xy\ne x'y'}\theta_{xy}-\theta_{x'y'}\le 2\pi, \ x,x'\in \{1,2\},\  y,y'\in \{3,4\}\,,$$ 
and at most one of the  box inequalities  
$$0\le \theta_{xy}\le \pi, \ x=1,2, \ y=3,4.$$

First, assuming that $C$ saturates exactly  one cycle inequality and  at most one box inequality, we show that $C$ is extremal. To show extremality, by Theorem \ref{extpointschar}, it suffices to show that $C$   has a unique PSD  completion  $\hc$,  that furthermore satisfies ${\rm rank}(\hc\circ \hc)=\binom{{\rm rank}(\hc)+1}{2}.$   For this, given an arbitrary PSD completion of the matrix~$C$: 
\be\label{partial}
\left(\begin{array}{cc|cc}1 & c_{12} & c_{13} & c_{14}\\
c_{12} & 1 & c_{23} & c_{24}\\
\hline
c_{13} & c_{23}& 1 & c_{34}\\
c_{14}& c_{24}& c_{34} & 1
\end{array}\right),
\ee
we show that $c_{12}$ and $c_{34}$ are uniquely determined. 
For concreteness, assume that the tight cycle inequality is:
\be\label{cvrghy}
\theta_{13}+\theta_{23}+\theta_{24}-\theta_{14}=0.
\ee
This assumption  is without loss of generality, as all cycle inequalities are equivalent up to permuting the parties and relabelling  the outcomes. In the optimization community this is known as  the ``switching symmetry'' of the cut polytope  \cite{DL}.

The fact that this equality leads to a unique completion should be evident by Lemma~\ref{cor:uniqueness}. However, let us be even more explicit and show that the unknown entries $c_{12},c_{34}$ are completely determined by this equation. Summing the  two triangle inequalities 
\be\label{sumtriangle}
-\theta_{13}-\theta_{23}+\theta_{12}\le 0 \quad  \text{ and } \quad  -\theta_{24}-\theta_{12}+\theta_{14}\le 0,
\ee
we get that 
\be\label{aggregate}
\theta_{13}+\theta_{23} +\theta_{24}-\theta_{14}\ge 0,
\ee
which combined with \eqref{cvrghy} implies that  
\be\label{entry1}
\theta_{12}=\theta_{13}+\theta_{23}=\theta_{14}-\theta_{24}.
\ee
Indeed, if either of the triangle inequalities in \eqref{sumtriangle} were strict, then \eqref{aggregate} would also be a strict inequality, contradicting \eqref{cvrghy}. 
Similarly, using the two triangle inequalities
\be
-\ta_{23}-\ta_{24}+\ta_{34}\le 0 \quad \text{ and } -\ta_{34}-\ta_{13}+\ta_{14}\le 0
\ee
we get that 
\be\label{entry2}
\ta_{34}=\ta_{23}+\ta_{24}=\ta_{14}-\ta_{13}.
\ee
Taking cosines in  \eqref{entry1} and  \eqref{entry2}, we see that  the two unspecified  entries  $c_{12}$ and $c_{34}$  in \eqref{partial} are uniquely determined. Specifically, we have: 
\begin{align} 
c_{12}&=\cos(\theta_{12})=c_{13}c_{23}-\sqrt{(1-c^2_{13})(1-c_{23}^2)};\\
c_{34}&=\cos(\theta_{34})=c_{23}c_{24}-\sqrt{(1-c^2_{23})(1-c_{24}^2)},
\end{align}
where we used that $\ta_{13}, \ta_{14}, \ta_{23}, \ta_{24}\in [0,\pi].$ 
Summarizing,   $C$  has  a PSD unique completion, denoted by $\hat{C}$.

The last step of the proof is to show that ${\rm rank}(\hc\circ \hc)=\binom{{\rm rank}(\hc)+1}{2}.$ For this,  let $x_1,\ldots, x_4$ be a Gram decomposition of $\hat{C}$. By \eqref{cvrghy}, these four  vectors span either a 1-dimensional or a 2-dimensional linear space. In particular, by projecting onto their linear span we may  assume that they  in fact lie  in $\R^2$. Thus, for the rank of $\hat{C}$ there are two cases to consider: $\rank(\hat{C})\in \{1,2\}.$

If $\rank(\hat{C})=1$, since $\rank(C)\le \rank(\hat{C})$ we have that $\rank(C)=1$,  contradicting the assumption that $\rank(C)=2$. Thus  we have that $\rank(\hat{C})=\rank(C)=~2.$ By Theorem \ref{extpointschar}, it remains to show that 
$$\rank(\hat{C}\circ \hat{C})=\dim {\rm span}(x_1x_1^\top, x_2x_2^\top, x_3x_3^\top,x_4x_4^\top)=3.$$

 Note that  $x_1$ is not parallel to $x_2$, for otherwise, the second row of $C$ would be a multiple of the first one, contradicting the assumption  $\rank(C)=2$. Similarly,  $x_3$ is not parallel to $x_4$.  
Thus, the sets   $\{x_1,x_2\}$ and $\{x_3,x_4\}$ are  basis for $\R^2$. Furthermore,   a simple calculation shows~that 
\be\label{x3}
x_3={\sin(\ta_{13})x_1+\sin(\ta_{23})x_2\over \sin(\ta_{13}+\ta_{23})},
\ee
and 
\be\label{x4}
   x_4={\sin(\ta_{14})x_1+\sin(\ta_{24})x_2\over \sin(\ta_{14}+\ta_{24})}.
   \ee
   For example, to see \eqref{x3}, expand $x_3$ in the $\{x_1,x_2\}$ basis, i.e., $x_3=\lambda x_1+\mu x_2$. Taking inner products with $x_1$ and $x_2$, and eliminating $\mu$ in the resulting linear system, we get  that $\lambda=(\cos(\ta_{13})-\cos(\ta_{23})\cos(\ta_{12}))/\sin^2(\ta_{12}).$ Lastly, substituting $\ta_{12}=\ta_{13}+\ta_{23}$, it follows that $\lambda=\sin(\ta_{13})/\sin(\ta_{13}+\ta_{23})$.
  Combining \eqref{x3} and \eqref{x4} we~get: 
\be\label{x3x3}
{\scriptstyle x_3x_3^\top={\sin^2(\ta_{13})x_1x_1^\top+\sin^2(\ta_{23})x_2x_2^\top+ 2\sin(\ta_{13})\sin(\ta_{23})(x_1x_2^\top+x_2x_1^\top)\over \sin^2(\ta_{13}+\ta_{23})}}
\ee
\be\label{x4x4}
{\scriptstyle x_4x_4^\top={\sin^2(\ta_{14})x_1x_1^\top+\sin^2(\ta_{24})x_2x_2^\top+ 2\sin(\ta_{14})\sin(\ta_{24})(x_1x_2^\top+x_2x_1^\top)\over \sin^2(\ta_{14}+\ta_{24})}}.
\ee
Next we show that 
\be\label{cefver}
x_1x_2^\top+x_2x_1^\top\not \in {\rm span}( x_1x_1^\top, x_2x_2^\top).
\ee
 Wlog we may assume that $x_1=(1,0)^\top$. This is because, for any unitary operator $U$, the vectors  $Ux_1,Ux_2, Ux_3,Ux_4$ also define  a Gram decomposition of $\hc$. Furthermore,  as $x_1=(1,0)^\top$, we have that $x_2=(\cos(\ta_{12}), \sin(\ta_{12}))^\top,$ and consequently, 
\begin{align}
x_1x_2^\top+x_2x_1^\top &=\begin{pmatrix} 2\cos(\ta_{12}) & \sin (\ta_{12})\\  \sin(\ta_{12})& 0 \end{pmatrix}\,,\\
x_1x_1^\top &=\begin{pmatrix} 1 & 0 \\0 & 0\end{pmatrix}\,,\\
x_2x_2^\top &=\begin{pmatrix}\cos^2(\ta_{12}) & \cos(\ta_{12})\sin(\ta_{12})\\\cos(\ta_{12})\sin\ta_{12}& \sin^2(\ta_{12})\end{pmatrix}\,.
\end{align}
As $\sin^2(\ta_{12})\ne0$ (since $x_1\not  \|x_2$) we see that \eqref{cefver} holds. 
\medskip 

Lastly, note that either $\sin(\ta_{13})\sin(\ta_{23})\ne0$ or $\sin(\ta_{14})\sin(\ta_{24})\ne0$. Indeed, if both are zero,  we would have  two tight box constraints, contradicting the hypothesis.  Wlog say  that $\sin(\ta_{13})\sin(\ta_{23})\ne0$. Since $x_1x_2^\top+x_2x_1^\top\not \in {\rm span}(x_1x_1^\top, x_2x_2^\top)$, it follows by \eqref{x3x3}~that 
$$x_3x_3^\top\not \in {\rm span}(x_1x_1^\top, x_2x_2^\top),$$
and thus $\dim {\rm span}(x_1x_1^\top, x_2x_2^\top, x_3x_3^\top,x_4x_4^\top)\ge 3.$ On the other hand, by \eqref{x3x3} and \eqref{x4x4} it follows that $\dim {\rm span}(x_1x_1^\top, x_2x_2^\top, x_3x_3^\top,x_4x_4^\top)\le  3.$
\medskip 

We now prove the converse  direction of the theorem. Say that $C$ is  a rank two  extreme point of $\co(2,2)$.   By Lemma \cite[Lemma 2.5]{prakash},  $C$ has a unique PSD completion  $\hc$, where $\rank(\hc)=\rank(C)=2$ and  $\rank(\hc\circ \hc)=3.$

First, note that  under the assumptions of the theorem there can be at most one tight box constraint. Indeed, having two (or more) tight box constraints  implies that   $x_1,x_2,x_3,x_4$ consists of two pairs of parallel vectors, which  contradicts the fact that ${\rm span}(x_1x_1^\top, x_2x_2^\top, x_3x_3^\top,x_4x_4^\top)=3$. In turn, the fact that we have at most one tight box constraint    implies that  at most one cycle inequality  
can be tight. For concreteness, say that 
$\ta_{14}=\ta_{13}+\ta_{23}+\ta_{24}$ and $\ta_{13}=\ta_{14}+\ta_{23}+\ta_{24}-2\pi$. Substituting the second equation into the first one we get that $\ta_{23}+\ta_{24}=\pi$, which when substituted back into the first equation gives that  $\ta_{14}=\ta_{13}+\pi$. In turn, using that $\ta_{13}, \ta_{14}\in [0,\pi],$ this implies that $\ta_{14}=\pi, \ta_{13}=0, $ i.e., we have two tight box constraints, a contradiction. Thus, it remains  to exhibit one tight cycle inequality. 

By assumption $C$ is extreme, and thus,  it admits a unique  PSD completion, i.e., there exists a unique choice for $c_{12}$ and $ c_{34}$ that makes the partial matrix \eqref{partial} PSD.  In particular, there exists a unique choice for the value of $\ta_{34}=\arccos (c_{34})$. 
We are now ready to conclude the proof of our theorem. 

We have already noted that under the assumptions of the theorem,  there exists a unique choice for the value of $\ta_{34}=\arccos (c_{34})$. Consequently, by Lemma \ref{cor:uniqueness}, the interval specified in \eqref{interval} should reduce to a single point, i.e., the two endpoints should coincide. This happens iff one expression  from the lower bound in \eqref{interval} is equal to another  expression  from the upper bound in \eqref{interval}. 

We  consider two cases. First, if these two inequalities have disjoint support,  we get a tight cycle inequality.   For example, from the equality $\ta_{31}-\ta_{41}=\ta_{32}+\ta_{42}$ we get the tight cycle inequality $\ta_{31}=\ta_{41}+\ta_{32}+\ta_{42}$. Second, if  the two inequalities have the same support, we get  a tight box inequality. In turn, this  gives a tight cycle inequality. For example, the equality $\ta_{32}-\ta_{42}=2\pi- (\ta_{32}+\ta_{42})$ gives the tight box inequality $\ta_{32}=\pi.$ As    $\rank(\hc)=2$, all  size three minors of $\hc$ are singular. In particular, the minor $\hc[2,3,4] $ is singular, and  thus,  by Theorem \ref{33case}, one of the triangle inequalities for $(2,3,4)$  is tight. Using that $\ta_{32}=\pi$,  combined with the fact that  we  can have at most one box inequality, we get  that $\ta_{34}+\ta_{24}=\pi$.  Lastly, 
the minor $\hc[1,3,4]$ is also  singular, and again by Theorem \ref{33case}, one of the triangle inequalities for $(1,3,4)$  is tight. For example, if $\ta_{13}+\ta_{14}=\ta_{34}$, by eliminating $\ta_{34}$ we get that $ \ta_{13}+\ta_{14}=\pi-\ta_{24}$, which is the tight cycle inequality  $\ta_{13}+\ta_{14}+\ta_{24}=\ta_{32}.$

Let us conclude with a remark that Part (ii) can be proved with a purely algebraic argument, i.e. never use the Gram vectors, by using the fact that rank of a matrix $M$ is the largest order of any non-zero minor in $M$. The advantage of this algebraic argument is to eliminate any kind of doubt caused by appeal to geometric intuitions.

\bibliography{citations}
\bibliographystyle{abbrv}
\end{document}